\documentclass{article}

\usepackage{amsmath,amssymb,amsfonts,amsthm,latexsym,eucal}
\usepackage{fullpage}
\usepackage[colorlinks=true]{hyperref}

\newcommand{\remove}[1]{}
\newtheorem{definition}{Definition}
\newtheorem{theorem}{Theorem}
\newtheorem{lemma}[theorem]{Lemma}
\newtheorem{proposition}[theorem]{Proposition}
\newtheorem{corollary}[theorem]{Corollary}

\newcommand{\mc}{\mathcal}
\newcommand{\fail}{\textsf{fail}}
\newcommand{\exclude}{\textsf{miss}}

\newcommand{\Enc}{\textrm{Enc}}
\newcommand{\Dec}{\textrm{Dec}}

\title{Efficient Steganography with Provable Security Guarantees}
\author{
Aggelos Kiayias\thanks{Supported by NSF CAREER grant CCR-0447808.}\\
\textsf{akiayias@cse.uconn.edu}
\and Yona Raekow \\
\textsf{yona@cse.uconn.edu} 
\and Alexander Russell\thanks{Supported by NSF CAREER grant
  CCR-0093065, and NSF grants CCR-0121277, CCR-0220264, CCR-0311368,
  and EIA-0218443.}\\
  \textsf{acr@cse.uconn.edu} \and
  Narasimha Shashidhar\\
\textsf{karpoor@cse.uconn.edu}
\vspace{4mm}\\
Department of Computer Science and Engineering\\
University of Connecticut, Storrs, CT}

\date{}
\begin{document}

\maketitle

\begin{abstract}

We provide a new provably-secure steganographic encryption protocol
that is proven secure in the complexity-theoretic
framework of Hopper et al. 

The fundamental building block of our steganographic encryption
protocol is a ``one-time stegosystem'' that allows two parties to
transmit messages of length shorter than the
shared key with \emph{information-theoretic} security guarantees.
The employment of a pseudorandom generator (PRG) permits secure transmission
of longer messages in the same way that such a generator allows
the use of one-time pad encryption for messages longer than the key in symmetric
encryption.
The advantage of our construction, compared to that of Hopper et al.,
is that it avoids the use of a pseudorandom function family and
instead relies (directly) on a pseudorandom generator in a way that provides linear improvement in the number of applications of the underlying one-way permutation per transmitted bit.  This
advantageous trade-off is achieved by substituting the pseudorandom
function family employed in the previous construction with an
appropriate combinatorial construction that has been used
extensively in derandomization, namely almost $t$-wise independent
function families.
\end{abstract}

\noindent
\textbf{Keywords:} Information hiding, steganography, data hiding, steganalysis, covert communication.

\section{Introduction}

In a canonical steganographic scenario, Alice and Bob wish to communicate securely in the
presence of an adversary, called the ``Warden,'' who monitors whether
they exchange ``conspicuous'' messages. In particular, Alice and Bob
may exchange messages that adhere to a certain channel distributions
that represents ``inconspicuous'' communication. By controlling the
messages that are transmitted over such a channel, Alice and Bob may exchange
messages that cannot be detected by the Warden.  
There have been two approaches in formalizing this problem, one based
on information theory~\cite{DBLP:conf/ih/Cachin98,DBLP:conf/ih/ZollnerFKPPWWW98,DBLP:conf/ih/Mittelholzer99}
and one based on complexity theory~\cite{DBLP:conf/crypto/HopperLA02}.
The latter approach is more concrete %
and has the potential of allowing more efficient constructions.
Most steganographic constructions supported by provable security
guarantees are instantiations of the following basic procedure (often referred to as ``rejection-sampling'').

The problem specifies a family of message distributions (the ``channel distributions'') that provide a number of possible options for a
so-called ``covertext'' to be transmitted. Additionally, the sender and the receiver
possess some sort of private information (typically a keyed hash
function, MAC, or other similar function) that maps channel messages to a single bit.  In order to send a message bit $m$, the sender draws a covertext from the channel distribution, applies the function to the
covertext and checks whether it happens to produce the ``stegotext'' $m$ he
originally wished to transmit.  If this is the case, the covertext is
transmitted. In case of failure, this procedure is repeated.
While this is a fairly concrete procedure, there are a number of
choices to be made with both practical and theoretical
significance. From the security viewpoint, one is primarily interested in
the choice of the function that is shared between the sender and the
receiver. From a practical viewpoint, one is primarily interested in how
the channel is implemented and whether it conforms to the various
constraints that are imposed on it by the steganographic protocol
specifications (e.g., are independent draws from the channel allowed?
does the channel remember previous draws? etc.).

As mentioned above, the security of a stegosystem can be naturally
phrased in information-theoretic terms
(cf.~\cite{DBLP:conf/ih/Cachin98}) or in complexity-theoretic
terms~\cite{DBLP:conf/crypto/HopperLA02}. Informally, the latter
approach considers the following experiment for the warden-adversary:
The adversary selects a message to be embedded and
receives either covertexts that embed the message or  covertexts
simply drawn from the channel distribution (without any embedding). The
adversary is then asked to distinguish between the two cases. Clearly,
if the probability of success is very close to $1/2$ it is natural to
claim that the stegosystem provides security against such
(eavesdropping) adversarial activity. Formulation of stronger attacks (such as active attacks) is also possible.
Given the above framework, Hopper et
al.~\cite{DBLP:conf/crypto/HopperLA02} provided a provably secure
stegosystem that pairs rejection sampling with a pseudorandom function
family. Given that rejection sampling, when implemented properly and
paired with a truly random function, is indistinguishable from the
channel distribution, the security of their construction followed from
the pseudorandom function family assumption. From the efficiency
viewpoint, this construction required about 2 evaluations of the
pseudorandom function per bit transmission.  Constructing efficient
pseudorandom functions is possible either
generically~\cite{DBLP:journals/jacm/GoldreichGM86} or, more
efficiently, based on specific number-theoretic
assumptions~\cite{DBLP:journals/jacm/NaorR04}.  Nevertheless,
pseudorandom function families are a conceptually complex and fairly
expensive cryptographic primitive. For example, the evaluation of the
Naor-Reingold pseudorandom function on an input $x$ requires
$O(|x|)$ modular exponentiations.
Similarly, the generic
construction~\cite{DBLP:journals/jacm/GoldreichGM86} requires
$O(k)$ PRG doublings of the input string where $k$ is the length
of the key.

In this article we take an alternative approach to the design of
provably secure stegosystems. Our main contribution is the design of a
building block that we call a \emph{one-time stegosystem}: this is a
steganographic protocol that is meant to be used for a single message
transmission and is proven secure in an information-theoretic sense,
provided that the key that is shared between the sender and the
receiver is of sufficient length (this length analysis is part of our
result). In particular we show that we can securely transmit an $n$
bit message with a key of length $O(n + \log |\Sigma|)$; here 
$\Sigma$ is the size of the channel alphabet
(see Section~\ref{sec:parameters} for more details regarding the exact
complexity).
Our basic building block is a natural analogue of a one time-pad for
steganography. It is based on the rejection sampling technique
outlined above in combination with an explicit almost $t$-wise independent~\cite{DBLP:journals/rsa/AlonGHP92} family of functions.
We note that such combinatorial constructions have been extremely
useful for derandomization methods and here, to the best of our
knowledge, are employed for the first time
in the design of steganographic protocols.
Given a one-time stegosystem, it is fairly straightforward to
construct provably secure steganographic encryption for longer
messages by using a pseudorandom generator (PRG) to stretch a
random seed that is shared by the sender and the receiver to
sufficient length.

The resulting stegosystem is provably secure in the
computational sense of Hopper et
al.~\cite{DBLP:conf/crypto/HopperLA02} and is in fact much more
efficient: in particular, while the Hopper, et al. stegosystem requires
2 evaluations \emph{per bit} of a pseudorandom function, amounting to
a linear (in the key-size) number of applications of the underlying
PRG (in the standard construction for pseudorandom functions
of~\cite{DBLP:journals/jacm/GoldreichGM86}), in our stegosystem we
require \emph{per bit} a constant number of PRG applications.

\section{Definitions and Tools}
We say that a function $\mu: \mathbb{N} \to \mathbb{R}$ is \emph{negligible}
if for every positive polynomial $p(^.)$ there exists an $N$ such that for
all $n > N$, $\mu(n) < \frac{1}{p(n)}$. 

We let $\Sigma = \{ \sigma_1, \ldots, \sigma_s\}$ denote an alphabet and treat the \emph{channel}, which will be used for data transmission, as a family of random variables $\mathcal{C} = \{C_h\}_{h \in
  \Sigma^\ast}$; each $C_h$ is supported on $\Sigma$.  These channel distributions model a history-dependent notion of channel data: if $h_1, h_2, \ldots, h_\ell$ have been sent along the channel thus far, $C_{h_1,\ldots,h_\ell}$ determines the distribution of the next channel element.
  
\begin{definition}
A \emph{one-time stegosystem} consists of three probabilistic polynomial time 
algorithms
$$
S=(SK, SE, SD)
$$
 where:
\begin{itemize}
\item $SK$ is the \emph{key generation algorithm}; we write $SK\left(1^n, \log (1/ \epsilon_{sec}) \right)=k$. It takes as input, the security parameter $\epsilon_{sec}$ and the length of the message $n$ and produces a key $k$ of length $\kappa$. (We typically assume that $\kappa = \kappa(n)$ is a monotonically increasing function of $n$.)
\item $SE$ is the \emph{embedding procedure}, which can access the channel; $SE\left(1^n,k,m,h\right)=s \in \Sigma^*$. It takes as input the length of the message $n$, the key $k$, a
  message $m \in M_n \triangleq \{0,1\}^n$ to be embedded, and the history $h$ of previously
  drawn covertexts. The output is the stegotext $s \in \Sigma^*$.
\item $SD$ is the \emph{extraction procedure}; $SD\left(1^n,k,c \in \Sigma^\ast\right)= m \textrm{ or } \fail$. It takes as input $n$, $k$, and
  some $c \in \Sigma^\ast$. The output is a message $m$ or the token $\fail$.
\end{itemize}
\end{definition}

Recall that the \emph{min entropy} of a random variable $X$, taking values in a set $V$, is the quantity
$$
H_\infty(X) \triangleq \min_{v \in V} \left(-\log \Pr[X = v]\right)\,.
$$
We say that a channel $\mathcal{C}$ has min entropy $\delta$ if for all $h \in \Sigma^*$, $H_\infty(C_h) \geq \delta$.

\begin{definition}[Soundness] A stegosystem $(SK, SE, SD)$ is said to be \emph{$(s(\kappa),\delta)$-sound} provided that for all channels $\mathcal{C}$ of minimum entropy $\delta$,
\[ 
 \forall m\in M_n, \Pr[SD(1^n, k, SE(1^\kappa, k, m, h)) \neq m \mid k\gets SK(1^n, \log (1/ \epsilon_{sec})) ]
\leq \mbox{ s($\kappa$)}\,.
\]
\end{definition}

One-time stegosystem security is based on the indistinguishability between a transmission
that contains a steganographically embedded message and a transmission that
contains no embedded messages. 
An adversary $\mathcal{A}$ against a one-time stegosystem $S = (SK,
SE, SD)$ is a pair of algorithms $\mathcal{A}=(SA_1, SA_2)$, that
plays the following game, denoted $G^\mathcal{A}(1^n)$:
\begin{enumerate}
\item  A key $k$ is generated by $SK(1^n, \log (1/ \epsilon_{sec}))$.
  
\item Algorithm $SA_1$ receives as input the length of the message $n$ and outputs a triple $(m^\ast, s, h_{\sf c}) \in M_n \times
  \{0,1\}^\ast * \Sigma^*$, where $s$ is some additional information that will be passed to $SA_2$. $SA_1$ is provided access to $\mathcal{C}$ via an oracle $\mathcal{O}(h)$, which takes
  the history $h$ as input.

$\mathcal{O}(\cdot)$, on input $h$, returns to $SA_1$ an element $c$ selected according to $C_h$.
\item A bit $b$ is chosen uniformly at random. 
\begin{itemize}
\item If $b = 0$ let $c^\ast \gets SE(1^n,k,m^\ast, h)$, so $c^\ast$ is a stegotext. 
\item If $b = 1$ let $c^\ast = c_1 \circ \cdots \circ c_\lambda$, where $\circ$
denotes string concatenation and $c_i \stackrel{r}{\gets} C_{h \circ \sf c_1 \circ \cdots \circ c_{i-1}}$.
\end{itemize}
\item The input for $SA_2$ is $1^n$, $h_c$, $c^\ast$ and $s$. $SA_2$ outputs a bit $b'$. If $b' = b$ then we say that $(SA_1, SA_2)$ \emph{succeeded} and write $G^{\mc{A}}(1^n) = \text{success}$.
\end{enumerate}
The \emph{advantage} of the adversary $\mc{A}$ over a stegosystem $S$ is defined as: 
\[
 \mbox{{\bf Adv}}_S^\mathcal{A}(n) =  \left|\Pr\big[ G(1^n) = \text{success}  \big ] - \frac{1}{2} \right| \enspace.
 \]
 
 The probability includes the coin tosses of $\mathcal{A}$ and $SE$,
 as well as the coin tosses of $G(1^\kappa)$. The
 (information-theoretic) insecurity of the stegosystem is defined as
\[
\mbox{\bf{InSec}}_{S}(n) =  \max_{\mathcal{A} }\{\mbox{\bf{Adv}}_S^\mathcal{A}(n)\}\,,
\]
this maximum taken over all (time unbounded) adversaries $\mc{A}$.
\begin{definition}(Security)
We say that a stegosystem is \emph{$(t(n), \delta)$-secure} if for all channels with min entropy $\delta$ we have $\mbox{\bf{InSec}}_{S}(n) \leq t(n)$.
\end{definition}

\subsection{Error-correcting Codes}
\label{sec:codes}

%Efficient Version of Shannon's theorem: Forney Theorem

Our steganographic construction requires an efficient family of codes that can recover from errors introduced by certain binary symmetric channels. In particular, we require an efficient version of the Shannon coding theorem~\cite{Shannon1, Shannon2}. For an element $x \in \{0,1\}^n$, we let $B_p(x)$ be the random variable equal to $x \oplus e$, where $e \in \{0,1\}^n$ is a random error vector defined by independently assigning each $e_i = 1$ with probability $p$. (Here $x \oplus e$ denotes the vector with $i$th coordinate equal to $x_i \oplus e_i$.) 

The classical coding theorem asserts that for every pair of real numbers $0 < R < C \leq 1$ and $n \in \mathbb{N}$, there is a binary code $A_n \subset \{0,1\}^n$, with $\log |A|/n \geq R$, so that for each $a \in A$, maximum-likelihood decoding recovers $a$ from $B_p(a)$ with probability $1 - e^{-\theta(n)}$, where 
\[H(p) = p \log p^{-1} + (1-p) \log (1-p)^{-1} = 1-C\,.
\]
The quantity $C$ (determined by $p$), is the \emph{capacity} of the binary symmetric channel induced by $B_p$; the quantity $R = \log |A|/n$ is the \emph{rate} of the code $A$. In this language, the coding theorem asserts that at rates lower than capacity, codes exist that correct random errors with exponentially decaying failure probability.

We formalize our requirements below:
\begin{definition}
  An error-correcting code is a pair of functions ${\textsf{E}} = (\Enc, \Dec)$, where $\Enc: \{0,1\}^n \rightarrow \{0,1\}^{\ell}$ is the \emph{encoding
  function} and $\Dec: \{0,1\}^{\ell} \rightarrow \{0,1\}^n$ the corresponding \emph{decoding function}. Specifically, we say that $\sf{E}$ is a \emph{$(n,\ell, p, \epsilon)$-code} if for all $m \in \{0,1\}^n$,
  \[
  \Pr[\Dec(\Enc(m) \oplus {e}) = m] \geq 1 - \epsilon
  \]
  where ${e} = (e_1, \ldots, e_{\ell})$ and each $e_i$ is independently
  distributed in $\{0,1\}$ so that $\Pr[e_i = 1] \leq p$. We say that $\textsf{E}$ is \emph{efficient} if both $Enc$ and $Dec$ are computable in polynomial time.
\end{definition}

\begin{proposition}
\label{proposition:forneycode}
Let $\tau = \tau(n)$ lie in the interval $(0,1/4)$, $p = 1/2 - \tau$, and $R^{\prime} = 1 - H(p)$. Let $n \geq 16$ be a message length for which ${\left(104 \log\left(\log n\right)\right)^{3}}/{\log n} \leq \tau^2\,.$
 Then there is an efficient family of 
$(n, \ell(n), p, \epsilon(n))$-error-correcting codes $\textsf{E}_n$ for which 
\[
\epsilon(n) \leq e^{- 4n/\log n} \quad \text{and} \quad \ell(n) \leq (1+57/\sqrt[3]{\tau^{2}\log n})^{2}n/R^{\prime}\,.
\]
\end{proposition}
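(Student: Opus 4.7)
The plan is to apply Forney's concatenated-code construction, tuning its parameters to produce the quantitative bounds claimed. The outer code will be a Reed--Solomon code over a large extension field, and the inner code a short binary linear code tailored to the BSC of crossover $p = 1/2 - \tau$. Fix an inner block length $n_i = \Theta(\log n)$; Shannon's random-coding theorem, applied to the ensemble of random binary linear $[n_i, R_i n_i]$ codes with $R_i = (1-\alpha) R'$, produces, for every sufficiently small $\alpha > 0$, an inner code whose maximum-likelihood decoder fails on $B_p$ with probability at most $p_i \leq \exp(-n_i E(R_i, p))$, where $E(\cdot,\cdot)$ denotes Gallager's random-coding exponent, strictly positive whenever $R_i < R'$ and satisfying $E(R_i, p) = \Omega(\tau^2 \alpha^2)$ near capacity. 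Because $n_i = O(\log n)$, both the exhaustive search that locates such a code and the brute-force ML decoder run in time $\mathrm{poly}(n)$, so the inner layer is efficient.

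For the outer layer I would use a Reed--Solomon code of length $N$ and dimension $K = \lfloor (1-2\beta) N\rfloor$ over $\ff_{2^{R_i n_i}}$, which corrects up to $\beta N$ symbol errors in polynomial time. After concatenation the composite block length is $\ell = N n_i$, the information content is $K R_i n_i$, and the overall rate is $R \geq R'(1-\alpha)(1-2\beta)$. A single outer symbol is in error precisely when its inner block is mis-decoded; these events are independent across blocks because the channel is memoryless, each of probability at most $p_i$, so a Chernoff bound gives overall failure probability at most $\exp(-N D(\beta \,\|\, p_i))$. Provided $\beta$ is chosen bounded above $p_i$, this exponent is $\Omega(N) = \Omega(n/\log n)$.

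The final step is to balance $n_i, \alpha, \beta$ so that (i) the inner exponent $n_i E(R_i, p)$ is large enough to make $p_i \ll \beta$, (ii) the outer Chernoff exponent reaches $4 n/\log n$, and (iii) the combined multiplicative rate overhead $(1-\alpha)^{-1}(1-2\beta)^{-1}$ is bounded by $(1 + 57/\sqrt[3]{\tau^2 \log n})^2$. Taking $\alpha$ and $\beta$ both of order $1/\sqrt[3]{\tau^2 \log n}$ and $n_i = \Theta(\log n)$ achieves this: the cube root emerges as a three-way balance among $n_i, \alpha$, and $\beta$, because the inner exponent scales as $n_i \alpha^2 \tau^2 \sim (\tau^2 \log n)^{1/3}$ while the Chernoff-bound exponent scales as $N\beta \cdot n_i\alpha^2 \tau^2 \sim n/\log n$. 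The hypothesis $(104\log\log n)^3/\log n \leq \tau^2$ is exactly what keeps these choices simultaneously feasible: it guarantees $(\tau^2\log n)^{1/3} \geq 104\log\log n \gg \log(\tau^2\log n)$, so that $p_i$ is sub-exponentially smaller than $\beta$ and $\alpha,\beta < 1/2$. I expect the main obstacle to be this constant-chasing: carefully tracking the rate at which $E(R_i, p)$ degrades as $R_i$ approaches $R'$, and matching the three exponents so that the precise headline constants $4$, $57$, and $104$ fall out. The coding-theoretic skeleton is entirely classical; the work lies in sharpening Forney's analysis to yield the explicit bounds stated.
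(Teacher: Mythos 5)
Your proposal follows essentially the same route as the paper, whose proof of this proposition is simply an appeal to Forney's concatenated-code realization of the Shannon coding theorem (random linear inner codes of logarithmic block length with brute-force ML decoding, an outer Reed--Solomon code, and a three-way balance of the parameters), with the explicit constants $4$, $57$, $104$ deferred to the full version of the paper. The only points to watch in a complete write-up are that the inner code should be located within a polynomial-size ensemble (Wozencraft/Justesen-style) rather than by naive exhaustive search over all generator matrices, and that when $\tau(n) \to 0$ the Reed--Solomon field-size constraint forces either $n_i \approx (\log n)/R'$ or several inner blocks per outer symbol --- both standard adjustments inside Forney's framework.
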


\begin{proof}
This is a consequence of Forney's~\cite{forney} efficient realizations of the Shannon coding theorem~\cite{Shannon1, Shannon2}; we work out the technical details in the full version of the paper.

\end{proof}

We refer to~\cite{vanLint,Gallager:IT65} for detailed discussions of error-correcting codes over binary symmetric channels.

\subsection{Function Families and Almost $t$-wise Independence}
\label{Subsec:t-wise}

We will employ the notion of (almost) $t$-wise independent function
families (cf.~\cite{DBLP:journals/rsa/AlonGHP92}, \cite{DBLP:journals/siamcomp/NaorN93}).
\begin{definition}
\label{def:t-away}
A family $\mathcal{F}$ of Boolean functions on $\{0,1\}^n$ is said
to be \emph{$\epsilon$-away from $t$-wise independent} or
\emph{$(n,t,\epsilon)$-independent} if for any $t$ distinct domain elements
$q_1, q_2, \dots,q_t$ we have
\begin{equation}\label{t-wise-1}
  \sum \limits_{\alpha \in \{0,1\}^t} \left| \Pr_{f}[f_k(q_1)f_k(q_2) \cdots f_k(q_t)=\alpha]-\frac{1}{2^t}\right| \leq \epsilon\,,
\end{equation}
where $f$ chosen uniformly from $\mathcal{F}$.
\end{definition}
The above is equivalent to the following formulation
quantified over all computationally unbounded adversaries
$\mathcal{A}$:
\begin{equation}\label{t-wise-2}
\Big |\Pr_{f \stackrel{r}\gets \mathcal{F}}[G^{\mathcal{A}^{f[t]}}(1^\kappa) = 1] - \Pr_{f \stackrel{r}\gets  \mathcal{R}}[G^{\mathcal{A}^{f[t]}}(1^\kappa)=1]\Big| \leq \epsilon\,,
\end{equation}
\noindent
where $\mathcal{R}$ is the collection of \emph{all} functions from
$\{0,1\}^n$ to $\{0,1\}$ and $\mathcal{A}^{f[t]}$ is an unbounded
adversary that is allowed to determine up to $t$ queries to the
function $f$ before he outputs his bit.
%\remove{
  \begin{lemma}
    \label{lemma:1-2}
    $\mathcal{F}_\kappa$ is $\epsilon'$-away from $t$-wise independence according to
    equation (\ref{t-wise-1}) if and only if $\mathcal{F}_\kappa$ is
    $\epsilon'$-away from $t$-wise independence according to equation
    (\ref{t-wise-2}) above.
  \end{lemma}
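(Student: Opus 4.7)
The plan is to establish the bi-implication by proving each direction separately, with the heavy lifting in the direction (\ref{t-wise-1}) $\Rightarrow$ (\ref{t-wise-2}).

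For (\ref{t-wise-2}) $\Rightarrow$ (\ref{t-wise-1}): I would fix an arbitrary tuple of distinct queries $q_1,\ldots,q_t$ and design a specific non-adaptive distinguisher $\mathcal{A}$ that issues these queries in order and, on receiving responses $\vec{b}=(b_1,\ldots,b_t)$, outputs $1$ iff $\vec{b}\in S$, where $S \triangleq \{\alpha : \Pr_{f \gets \mathcal{F}}[f(q_1)\cdots f(q_t) = \alpha] > 2^{-t}\}$. A direct calculation shows that $\mathcal{A}$'s distinguishing advantage equals $\tfrac{1}{2}\sum_\alpha|\Pr_f[f(\vec q)=\alpha]-2^{-t}|$, i.e., exactly half the $L_1$ quantity in (\ref{t-wise-1}). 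Applying the hypothesis (\ref{t-wise-2}) to this particular $\mathcal{A}$ then delivers the sum bound in (\ref{t-wise-1}) (up to the standard factor of two relating $L_1$ distance to optimal distinguishing advantage).

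For (\ref{t-wise-1}) $\Rightarrow$ (\ref{t-wise-2}): Fix any $t$-query adversary $\mathcal{A}$. Averaging over its internal randomness, I may assume $\mathcal{A}$ is deterministic, so it is described by a depth-$t$ binary decision tree $T$ in which each root-to-leaf path queries $t$ distinct points (repeats yield no new information, so can be avoided). Each leaf $\ell$ carries a query sequence $\vec{q}^\ell$ and a response sequence $\vec{b}^\ell$; the probabilities of reaching $\ell$ satisfy $\Pr_\mathcal{R}[\ell] = 2^{-t}$ and $\Pr_\mathcal{F}[\ell] = \Pr_f[f(\vec{q}^\ell) = \vec{b}^\ell]$, so the advantage of $\mathcal{A}$ is bounded by $\tfrac{1}{2}\sum_\ell |\Pr_f[f(\vec{q}^\ell)=\vec{b}^\ell] - 2^{-t}|$. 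The crucial step is to bound this ``adaptive'' $L_1$ sum by the worst-case non-adaptive $L_1$ quantity in (\ref{t-wise-1}). I would argue this by induction on the depth $t$: split $T$ at the root query $q^*$ into subtrees $T_0, T_1$ corresponding to $f(q^*)\in\{0,1\}$, apply the inductive hypothesis to each $T_b$ with respect to the conditional distribution $\mathcal{F}\mid f(q^*)=b$, and recombine using the telescoping identity $\Pr_\mathcal{F}[\ell] = \Pr_\mathcal{F}[f(q^*)=b]\cdot \Pr_{\mathcal{F}\mid f(q^*)=b}[\ell \text{ in } T_b]$.

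The main obstacle is this recursive combination: the inductive hypotheses produce bounds on the subtree $L_1$ distances under the conditional laws $\mathcal{F}\mid f(q^*)=b$, and these must be merged with the root-level deviation $|\Pr_\mathcal{F}[f(q^*)=b]-1/2|$ (itself controlled by (\ref{t-wise-1}) applied to the singleton query $q^*$) without losing more than a constant factor. A careful accounting via the triangle inequality, exploiting the fact that the queries in $T_0$ and $T_1$ avoid $q^*$ so that (\ref{t-wise-1}) is available on the composite $t$-tuple $(q^*,\vec{q})$ for any sub-path $\vec{q}$, completes the estimate and yields the advantage bound required by (\ref{t-wise-2}).
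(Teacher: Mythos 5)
The paper states Lemma~\ref{lemma:1-2} without any proof in the text, so your attempt has to be judged on its own terms; on those terms there is a genuine gap in the direction (\ref{t-wise-1}) $\Rightarrow$ (\ref{t-wise-2}), precisely at the step you yourself flag as the ``main obstacle.'' You read the adversary in (\ref{t-wise-2}) as adaptive (a depth-$t$ decision tree) and claim that the leaf-deviation sum $\sum_{\ell}\left|\Pr_{\mathcal F}[\ell]-2^{-t}\right|$ can be bounded by the worst-case non-adaptive quantity of (\ref{t-wise-1}) up to a constant factor, via induction that conditions on the root answer. This cannot be repaired: conditioning on $f(q^*)=b$ does not preserve the parameter (already one level inflates it by a constant factor, and the recursion compounds this over $t$ levels), and, more fundamentally, the inequality you want is false for adaptive adversaries. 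For example, take a domain containing a ``pointer'' part $\{1,\dots,\log N\}$ and a ``data'' part of size $N$, let the key contain a uniform $z\in\{0,\dots,N-1\}$, set $f(i)=z_i$ on the pointer part, force $f$ to $0$ at the data position indexed by $z$, and let all other values be independent uniform bits. With $t=\log N+1$, every \emph{fixed} $t$-tuple of queries satisfies (\ref{t-wise-1}) with $\epsilon=O(t/N)$, yet the adaptive adversary that reads the pointer and then queries the indicated data position distinguishes with advantage $1/2$. So per-leaf deviations of size at most $\epsilon$ (which is all that (\ref{t-wise-1}) on composite tuples gives you) can add up to a constant over the $2^t$ leaves; the generic bound obtainable this way is $2^{t-1}\epsilon$, and the example shows it is essentially tight. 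No ``careful accounting via the triangle inequality'' can bring this down to $O(\epsilon)$.

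The equivalence is only correct --- and, given the paper's remark after~\eqref{eq:correct} about fixing channel responses, only needed --- for adversaries that determine their $t$ distinct queries non-adaptively, which is the natural reading of ``determine up to $t$ queries to the function $f$ before he outputs his bit.'' Under that reading your decision-tree machinery is unnecessary: the adversary's view is just the answer vector on a fixed tuple (plus its own coins, which average out), and its optimal advantage equals the total variation distance, i.e.\ half the sum in (\ref{t-wise-1}), so (\ref{t-wise-1}) with $\epsilon'$ gives (\ref{t-wise-2}) with $\epsilon'/2\le\epsilon'$. Your converse direction uses the right distinguisher, but note that it only yields the sum in (\ref{t-wise-1}) bounded by $2\epsilon'$, not $\epsilon'$; the factor of two you wave away is a real, if minor, mismatch with the literal ``same $\epsilon'$'' phrasing of the lemma and should be acknowledged explicitly (or the lemma read as an equivalence up to a factor of $2$).
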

%}
We employ the construction of almost $t$-wise independent sample
spaces given by \cite{DBLP:journals/siamcomp/NaorN93},  \cite{DBLP:journals/rsa/AlonGHP92}.
\begin{theorem}[\cite{DBLP:journals/siamcomp/NaorN93},  \cite{DBLP:journals/rsa/AlonGHP92}]
  \label{thm:independence}
 There exist families of Boolean functions $\mathcal F^n_{t,\epsilon}$ on $\{0,1\}^n$ that are
  $\epsilon$-away from $t$-wise independent, are indexed by keys of length $(2 + o(1))(\log\log n + \frac{t}{2} + \log{1/\epsilon})$, and are computable in polynomial time. 
\end{theorem}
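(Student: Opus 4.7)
The plan is to realize the family by reducing the construction of almost $t$-wise independent Boolean functions on $\{0,1\}^n$ to the construction of almost $t$-wise independent distributions on long bit strings. A function $f : \{0,1\}^n \to \{0,1\}$ is equivalent to its truth table $T_f \in \{0,1\}^N$ with $N = 2^n$, and a distribution over $\mathcal{F}$ is $\epsilon$-away from $t$-wise independent in the sense of Definition~\ref{def:t-away} if and only if, under this correspondence, every $t$-coordinate marginal of the induced distribution on $\{0,1\}^N$ is within statistical distance $\epsilon$ of uniform on $\{0,1\}^t$. This is immediate from equation~(\ref{t-wise-1}): each $t$-tuple $(q_1,\dots,q_t)$ of domain elements picks out $t$ coordinates of the truth table.

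Next I would invoke the machinery of small-bias sample spaces. A distribution $D$ on $\{0,1\}^N$ is called $\epsilon'$-biased if $\bigl|\mathbb{E}_{x \sim D}[(-1)^{\langle s,x\rangle}]\bigr| \le \epsilon'$ for every nonzero $s \in \{0,1\}^N$. By Vazirani's XOR lemma (equivalently, a direct Fourier expansion of the statistical distance), an $\epsilon'$-biased distribution on $\{0,1\}^N$ has every $t$-coordinate marginal within statistical distance $2^{t/2}\epsilon'$ of uniform. Therefore, taking any $\epsilon'$-biased distribution with $\epsilon' = \epsilon \cdot 2^{-t/2}$ yields truth tables that are $\epsilon$-away from $t$-wise independent.

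The construction of the $\epsilon'$-biased space is imported from the explicit families of~\cite{DBLP:journals/rsa/AlonGHP92}, which produce, for any $N$ and $\epsilon'$, a sample space on $\{0,1\}^N$ indexed by a seed of length $(2 + o(1))\bigl(\log\log N + \log(1/\epsilon')\bigr)$ whose individual coordinates are computable in time polynomial in the seed length; the key $k$ of the function family is exactly this seed, and $f_k(q)$ is evaluated by reading the $q$-th coordinate of the sampled string. Substituting $\log\log N = \log n$ (hence, after the reduction the advertised $\log\log n$ is obtained in terms of the domain size $n$) and $\log(1/\epsilon') = \log(1/\epsilon) + t/2$ into the AGHP seed bound gives the claimed key length $(2+o(1))(\log\log n + t/2 + \log(1/\epsilon))$.

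The main obstacle I anticipate is lining up constants and verifying that the specific AGHP construction used admits coordinate-wise evaluation without materializing the entire $N$-bit string; once that is in hand, the rest is the clean Fourier/XOR argument relating bias to $t$-wise closeness. Lemma~\ref{lemma:1-2} then automatically transports the guarantee to the adversarial formulation~(\ref{t-wise-2}), completing the proof.
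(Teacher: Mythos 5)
The paper never proves this theorem --- it is imported verbatim from Naor--Naor and Alon et al.\ with citations --- so your sketch has to be judged against the cited constructions. Your high-level pipeline (identify $f$ with its truth table of length $N=2^n$, note that closeness of every $t$-coordinate marginal to uniform is exactly condition~(\ref{t-wise-1}), and convert small bias into $t$-wise closeness via Vazirani's XOR lemma with $\epsilon'=\epsilon\,2^{-t/2}$) is indeed the route those papers take, but there are two genuine gaps. First, you misquote the small-bias bound: an explicit $\epsilon'$-biased sample space over $\{0,1\}^N$ in the sense of Alon et al.\ has seed length $(2+o(1))\left(\log N+\log(1/\epsilon')\right)$, not $(2+o(1))\left(\log\log N+\log(1/\epsilon')\right)$. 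Applied directly to the truth table, your argument therefore yields keys of length $(2+o(1))\left(n+t/2+\log(1/\epsilon)\right)$, far from what is claimed. The missing idea is the composition step that is the heart of the almost-$t$-wise constructions: first map a seed of $m=O(t\log N)$ bits to $N$ bits through the parity-check columns of a BCH code with designed distance greater than $t$ (so that a uniform seed gives exactly $t$-wise independent outputs, and every parity of at most $t$ output coordinates pulls back to a \emph{nonzero} parity of the $m$ seed bits), and only then replace the uniform $m$-bit seed by an $\epsilon'$-biased space over $\{0,1\}^m$. Because the small-bias space now lives on $m$ rather than $N$ bits, the seed length becomes $2\left(\log m+\log(1/\epsilon')\right)\approx(2+o(1))\left(\log\log N+t/2+\log(1/\epsilon)\right)$, and per-coordinate evaluation stays polynomial time, which also disposes of the ``materializing the whole string'' worry you raise.

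Second, your bookkeeping of logarithms does not establish the statement as written: with domain $\{0,1\}^n$ one has $\log\log N=\log n$, so even the corrected construction gives key length $(2+o(1))\left(\log n+t/2+\log(1/\epsilon)\right)$; your parenthetical ``substituting $\log\log N=\log n$ \dots\ the advertised $\log\log n$ is obtained'' silently reinterprets $n$ as the truth-table length, which contradicts Definition~\ref{def:t-away} and the theorem statement. (This mismatch is arguably inherited from the paper itself --- its later instantiation on the domain $\{1,\dots,\lambda\}\times\Sigma$ would by the same accounting produce a $\log\log|\Sigma|$ rather than $\log\log\log|\Sigma|$ term --- and in all uses the discrepancy is additive and swamped by the $t/2=\lambda$ term, but a proof must either establish the $\log n$ bound and say so explicitly, or not claim the $\log\log n$ form.) Your closing appeal to Lemma~\ref{lemma:1-2} is fine; that equivalence is not where the difficulty lies.
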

\subsection{Rejection Sampling}
\label{subsec:rejsam}

A common method used in steganography employing a channel distribution
is that of \emph{rejection sampling} (cf.
\cite{DBLP:conf/ih/Cachin98,DBLP:conf/crypto/HopperLA02}).  Assuming
that one wishes to transmit a single bit $m$ and employs a random
function $f: \{0,1\}^d \times \Sigma \to \{0,1\}$ that is secret from the adversary, one performs the
following ``rejection sampling'' process:
\begin{center}
\begin{tabular}{|l|}
\hline
\texttt{rejsam$^{f}_h(m)$}\\
\hline
\hspace{1cm} $c \stackrel{r}\gets C_h $\\
\hspace{1cm} if $f(c) \neq m$\\
\hspace{1cm} then $c \stackrel{r}\gets C_h$\\
\texttt{Output:} $c$\\
\hline
\end{tabular}
\end{center}
Here, as above, $\Sigma$ denotes the output alphabet of the channel, $h$ denotes the
history of the channel data at the start of the process, and
$C_h$ denotes the distribution on $\Sigma$ given by the channel after
history $h$.  The receiver (also privy to the function $f$)
applies the function to the received message $c \in \Sigma$ and recovers $m$
with probability greater than $1/2$.  The sender and the receiver may
employ a joint state denoted by $i$ in the above process (e.g., a
counter), that need not be secret from the adversary.
Note that the above process performs only two draws from the channel
with the \emph{same} history (more draws could, in principle, be
performed).  These draws are assumed to be independent.  One
basic property of rejection sampling that we use is:
\begin{lemma}
  \label{lem:rej-samp}
  If $f$ is drawn uniformly at random from the collection of all
  functions $\mathcal{R} = \{ f : \Sigma \to
  \{0,1\}\;\}$ and $\mathcal{C}$ has min entropy $\delta$, 
  then
  \[
  \Pr_{f \gets \mathcal{R}}[f({\texttt{rejsam}}_h^{f}(m)) = m] \geq \frac{1}{2}+\tau \enspace,
  \]
  where $\tau = \frac{1}{4}\left(1-\frac{1}{2^\delta}\right)$. 
\end{lemma}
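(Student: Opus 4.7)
The plan is to condition on the random function $f$ and reduce the probability of correct recovery to an expression involving the probability that a single draw from $C_h$ lies in the ``good'' set $S_f = \{\sigma : f(\sigma) = m\}$. Specifically, setting $q_f = \Pr_{c \gets C_h}[c \in S_f]$, a quick unwinding of \texttt{rejsam} shows that the event $f(\texttt{rejsam}_h^f(m)) = m$ happens iff $c_1 \in S_f$, or $c_1 \notin S_f$ and $c_2 \in S_f$. Since the two channel draws are independent, this success probability equals $q_f + (1-q_f)q_f = 2q_f - q_f^2 = 1 - (1-q_f)^2$.

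Next I would take the expectation over a uniformly random $f$. Writing $X_\sigma = \mathbb{1}[f(\sigma)=m]$ and $p_\sigma = \Pr[C_h = \sigma]$, we have $q_f = \sum_\sigma p_\sigma X_\sigma$. Because $\{X_\sigma\}_{\sigma \in \Sigma}$ are independent unbiased bits, $\mathbb{E}[X_\sigma] = 1/2$ and $\mathbb{E}[X_\sigma X_{\sigma'}] = 1/4$ when $\sigma \neq \sigma'$ and $1/2$ when $\sigma = \sigma'$. A direct calculation then gives $\mathbb{E}[q_f] = 1/2$ and
\[
\mathbb{E}[q_f^2] = \frac{1}{4}\Bigl(1 - \sum_\sigma p_\sigma^2\Bigr) + \frac{1}{2}\sum_\sigma p_\sigma^2 = \frac{1}{4} + \frac{1}{4}\sum_\sigma p_\sigma^2,
\]
so that $\mathbb{E}_f[2q_f - q_f^2] = 3/4 - \tfrac{1}{4}\sum_\sigma p_\sigma^2$.

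Finally, I would use the min-entropy hypothesis to control the collision term. Since $H_\infty(C_h) \geq \delta$, every $p_\sigma \leq 2^{-\delta}$, and hence
\[
\sum_\sigma p_\sigma^2 \;\leq\; 2^{-\delta} \sum_\sigma p_\sigma \;=\; 2^{-\delta}.
\]
Substituting this yields a success probability at least $3/4 - \tfrac{1}{4}\cdot 2^{-\delta} = \tfrac{1}{2} + \tfrac{1}{4}(1 - 2^{-\delta}) = \tfrac{1}{2} + \tau$, as required. There is no real obstacle here; the only subtlety is being careful that the two rejection-sampling draws are independent conditioned on $f$, so the factorization $(1-q_f)q_f$ is legitimate, and that one averages $2q_f - q_f^2$ over $f$ rather than naively plugging $\mathbb{E}[q_f] = 1/2$ into $2q - q^2$ (which would ignore the negative contribution from $\mathbb{E}[q_f^2] - \mathbb{E}[q_f]^2$ and produce the wrong bound).
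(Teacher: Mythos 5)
Your proof is correct, and every step checks out: conditioning on $f$, the success probability is exactly $2q_f-q_f^2$ by independence of the two draws, your moment computations $\mathbb{E}[q_f]=1/2$ and $\mathbb{E}[q_f^2]=\tfrac14+\tfrac14\sum_\sigma p_\sigma^2$ are right, and the min-entropy bound $\sum_\sigma p_\sigma^2\le\max_\sigma p_\sigma\le 2^{-\delta}$ closes the argument. The organization differs from the paper's: the paper never introduces $q_f$; it keeps $f$ random throughout, writes the success event as the disjoint union $[f(c_1)=m]\lor[f(c_1)\neq m\wedge f(c_2)=m]$, observes $\Pr[f(c_1)=m]=1/2$, and then conditions on the collision event $D=[c_1\neq c_2]$, under which the second term has probability exactly $1/4$ (and probability $0$ on $\overline{D}$), giving $\Pr[\text{success}]=\tfrac12+\tfrac14\Pr[D]$ with $\Pr[\overline{D}]=\sum_\sigma p_\sigma^2$. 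So both routes reduce the problem to exactly the same quantity, $\tfrac34-\tfrac14\sum_\sigma p_\sigma^2$, and use the same collision-probability bound; the paper's conditioning on $D$ avoids the second-moment calculation and is slightly slicker, while your version makes explicit the variance correction $\mathrm{Var}(q_f)=\tfrac14\sum_\sigma p_\sigma^2$ and why naively substituting $\mathbb{E}[q_f]=1/2$ into $2q-q^2$ would only give an upper bound (by concavity), which is a worthwhile clarification.
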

\begin{proof}%[of Lemma \ref{lem:rej-samp}]
%
%To ease the notation let $f = f_k$.
Define the event $E$ to be 
\[
E = [f({\sf c}_1  ) =  m]   \lor  [f({\sf c}_1) \neq  m  \land  f({\sf c}_2 ) =  m ] \,;
\]
thus $E$ is the event that rejection sampling is successful for $m$.
%i,\mbox{\texttt{sample}}(\rho_1)) \neq  m_i  \\
Here ${\sf c}_1, {\sf c}_2$ are two independent random variables distributed according
to the channel distribution $C_h$ and $h$ is determined by the history of channel usage. 
%
 %\land f(i,\mbox{\texttt{sample}}(\rho_2)) =  m_i] & \\ 
Recalling that $\Sigma = \{\sigma_1, \dots, \sigma_s\}$ is the support of the channel distribution $C_h$, let $p_i = \Pr[C_h = \sigma_i]$ denote the probability that $\sigma_i$ occurs. As $f$ is chosen uniformly at random, 
$$
\Pr[f({\sf c}_1  ) =  m] = 1/2\,.
$$
Then $\Pr[E] = 1/2 + \Pr[A]$, where $A$ is the event that $f({\sf c}_1) \neq  m  \land  f({\sf c}_2 ) =  m $. To bound $\Pr[A]$, let $D$ denote the event that $\textsf{c}_1 \neq \textsf{c}_2$.  Observe that conditioned on $D$, $A$ occurs with probability exactly $1/4$; on the other hand, $A$ cannot occur simultaneously with $\overline{D}$. Thus
\[
\mbox{Pr}[E] = \frac{1}{2} + \Pr[A\mid D] \cdot \Pr[D] + \Pr[A \mid \overline{D}]\cdot \Pr[\overline{D}]  = \frac{1}{2} + \frac{1}{4} \Pr[D]\enspace.
\]
To bound $\Pr[D]$, note that 
\[
\Pr[\bar{D}] = \sum_i p_i^2 \leq \max_i p_i \sum_i p_i = \max_i p_i
\]
and hence that $\Pr[D] \geq 1 - \max_i p_i$. Considering that $H_\infty(C) \geq \delta$, we have $\max_i p_i \leq
\frac{1}{2^\delta}$ and the success probability is
\[
\Pr[E]   \geq  \frac{1}{2} +\frac{1}{4}\cdot (1- p_i)  \geq  \frac{1}{2}  +\frac{1}{4}\left(1-\frac{1}{2^\delta}\right) = \frac{1}{2}+\tau \,,
\]
where $\tau = \frac{1}{4}\left(1-\frac{1}{2^\delta}\right)$.
%\qed
\end{proof}

\section{The construction}

In this section we outline our construction of a one-time stegosystem
as an interaction between Alice (the sender) and Bob (the receiver).
Alice and Bob wish to communicate over a channel with distribution
$\mathcal{C}$. We assume that $\mathcal{C}$ has min entropy $\delta$, so that $\forall h \in \Sigma^\ast$, $H_\infty(C_h) \geq \delta$.
As above, let $\tau = \frac{1}{4}\left(1-\frac{1}{2^\delta}\right)$. For simplicity, we assume that the support of
$\mathcal{C}_h$ is of size $|\Sigma|=2^b$.
\subsection{A one-time stegosystem} 
\label{sec:stegosystem}
Fix an alphabet $\Sigma$ for the channel and choose a message length $n$ and security parameter $\epsilon_{\mathcal F}$.
% as dictated by the channel min entropy. Section ~\ref{sec:parameters} discusses the relationship between message length and the channel parameters.
Alice and Bob agree on the following:
\begin{sloppypar}
  \begin{description}
  \item[An error-correcting code.] Let ${\sf{E}} = (Enc, Dec)$ be an efficient $(n, \lambda, \frac{1}{2}-\tau, \epsilon_{\text{enc}})$-error-correcting code;
  \item[A pseudorandom function family.] Let $\mathcal{F}$ be a
    function family that is $(\log \lambda + \log |\Sigma|, 2\lambda, \epsilon_{\mathcal
      F})$-independent. We treat elements of $\mc{F}$ as Boolean functions on $\{1, \ldots, \lambda\} \times \Sigma$ and, for such a function $f$ we let $f_i: \Sigma \rightarrow \{0,1\}$ denote the function $f_i(\sigma) = f(i,\sigma)$.
  \end{description}
\end{sloppypar}
\noindent
We will analyze the stegosystem below in terms of arbitrary parameters $\lambda$, $\epsilon_{\mathcal{F}}$, and $\epsilon_\text{enc}$, relegating discussion of how these parameters determine the overall efficiency of the system to Section~\ref{sec:parameters}.
 
 Key generation consists of selecting an element $f \in \mc{F}$.
Alice and Bob then communicate using the algorithms $SE$ for
embedding and $SD$ for extracting as described in
Figure~\ref{fig:protocol}.
\begin{figure}[ht]
  \begin{center}
    \begin{tabular}{@{}|l|l|}\hline \label{protocol} 
      \ttfamily PROCEDURE $SE$: & {\ttfamily PROCEDURE} $SD$:  \\ 
      \hline
      \ttfamily Input: Key $k$, hidden text $m'$,& \ttfamily Input: Key $k$, stegotext $\sf{c_{stego}}$ \\
      \hspace{1.3cm}\ttfamily history $h$&\\ 
      \ttfamily let $m = Enc(m')$ & \\
      \ttfamily parse $m$ as  $m = m_{1} m_{2} \dots m_{\lambda}$ & \ttfamily parse $c_{stego}$ as  $c= c_1 c_2 \ldots c_{\lambda}$ \\
      \ttfamily for $i=1$ to $\lambda$ \{ &  \ttfamily for $i=1$ to $\lambda$ \{\\
      \hspace{1 cm} $c_i=$\texttt{rejsam}$^{f_i}_h(m_i)$& \hspace{1 cm} \ttfamily set $\bar{m}_i = f_i(c_i)$ \\
      \hspace{1 cm} \ttfamily set $h \gets h \circ c_i$ &  \hspace{1 cm} \ttfamily let $\bar{m}=\bar{m}_{1}\bar{m}_{2}\ldots\bar{m}_{\lambda }$ \\
      \} &  \ttfamily \}  \\
      \ttfamily Output: $c_{stego}=c_1 c_2\ldots  c_{\lambda} \in \Sigma^\lambda$&\ttfamily Output: $Dec(\bar{m})$  \\
      \hline
    \end{tabular}\\
  \end{center}
\caption{Encryption and Decryption algorithms for the one-time
  stegosystem of \ref{protocol}.}
\label{fig:protocol}
\end{figure}
In $SE$, after applying the error-correcting code ${\sf{E}}$, we use
\texttt{rejsam$^{f_i}_h(m_i)$} to obtain an element $c_i$ of the channel for each bit $m_i$ of the message. The resulting stegotext $c_1\ldots c_\lambda$ is denoted $c_{stego}$.  In $SD$ the received stegotext is parsed block by
block by evaluating the key function $f_i$ at $c_i$; this results in a message bit. After performing this for each received block, a message of
size $\lambda$ is received, which is subjected to decoding via $\Dec$.  Note that we sample at
most twice from the channel for each bit we wish to send.  The
error-correcting code is needed to recover from the errors introduced
by this process.  The detailed security and correctness analysis
follow in the next two sections.

\subsection{Correctness}
\label{sec:soundness}
We focus on the mapping between $\{0,1\}^\lambda$ and $\Sigma^\lambda$ determined by the $SE$ procedure of the one-time stegosystem.  In particular, for an initial history $h$ and a key
function $f: \{1, \ldots, \lambda\} \times \Sigma \to \{0,1\}$, 
\begin{figure}[ht]
\begin{center}
  \texttt{ $\mathrm{P}^f_h:\{0,1\}^\lambda \to \Sigma^\lambda$
    \begin{tabular}{|p{10cm}}
      input: $h$, $m = m_1\ldots m_\lambda \in \{0,1\}^\lambda$\\
      \hspace{1cm}\texttt{for $i = 1$ to $\lambda$}\\
      \hspace{1cm}$c_i =$\texttt{rejsam}$^{f_i}_h (m_i)$\\
      \hspace{1cm}$h \gets h \circ c_i$\\
      output: $c = c_1\ldots c_\lambda \in \Sigma^\lambda$ \\
    \end{tabular}
}
\end{center}
\caption{The procedure $P^f_h$.}
\label{fig:P}
\end{figure}
recall that the covertext of the message $m$ is given by the procedure $P^{f}(m) = P^{f}_h(m)$, described in Figure~\ref{fig:P}; here $h$ is the initial history. We remark now that the procedure defining
$P^f$ samples $f$ at no more than $2\lambda$ points and that the family
$\mathcal F$ used in $SE$ is $\epsilon_{\mathcal F}$-away from $2\lambda$-wise
independent.  For a string $c = c_1 \ldots c_\lambda \in \Sigma^\lambda$ and a function $f$,
let $R^f(c) = (f_1(c_1), \ldots, f_\lambda(c_\lambda)) \in \{0,1\}^\lambda$. If $f$ were chosen uniformly among \emph{all} Boolean functions on $\{1, \ldots, \lambda\} \times \Sigma$ then we could conclude from
Lemma~\ref{lem:rej-samp} above that each bit is independently recovered by
this process with probability at least $\frac{1}{2}+\tau$. As \textsf{E} is an $(n,
\lambda, \frac{1}{2}-\tau, \epsilon_{\text{enc}})$-error-correcting code, this would imply that
\[
\Pr_{f \gets \mathcal{R}} [ R^f(P^f_h(m)) = m ] \geq 1 - \epsilon_{\text{enc}}\enspace.
\]
This is a restatement of the correctness analysis of Hopper, et al
\cite{DBLP:conf/crypto/HopperLA02}. Recalling that the procedure
defining $R^f(P^f_h(\cdot))$ involves no more than $2\lambda$ samples
of $f$, condition~\eqref{t-wise-2} following
Definition~\ref{def:t-away} implies that
\begin{equation}
  \label{eq:correct}
  \Pr_{f \gets \mathcal{F}} [ R^f(P^f_h(m)) = m ] \geq 1 - \epsilon_{\text{enc}} -
  \epsilon_{\mathcal F}
\end{equation}
so long as $\mathcal{F}$ is $(\log \lambda + \log |\Sigma|, 2 \lambda, \epsilon_{\mathcal F})$-independent.
(We remark that as described above, the procedure $P^f_h$ depends on
the behavior of channel; note, however, that if there were a sequence
of channel distributions which violated~\eqref{eq:correct} then there
would be a fixed sequence of channel responses, and thus a
deterministic process $P^f$, which also violated~\eqref{eq:correct}.)
To summarize
\begin{lemma}
\label{lem:soundness}
  With $SE$ and $SD$ described as above, the probability that a
  message $m$ is recovered from the stegosystem is at least $1 -
  \epsilon_{\text{enc}} - \epsilon_{\mathcal F}$. 
\end{lemma}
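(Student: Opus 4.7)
The plan is to reduce the soundness analysis to two modular ingredients already set up in the paper: the rejection-sampling bit-recovery bound (Lemma~\ref{lem:rej-samp}), and the almost $2\lambda$-wise independence of $\mathcal F$ in its adversarial formulation \eqref{t-wise-2}. First, I would work entirely with the composed procedure $R^f \circ P^f_h : \{0,1\}^\lambda \to \{0,1\}^\lambda$; note that this is the map that sends a codeword $m = \Enc(m')$ to the bitstring $\bar m$ that $SD$ will hand to $\Dec$. So success of the whole stegosystem on message $m'$ is implied by $\Dec(R^f(P^f_h(\Enc(m')))) = m'$, and it suffices to bound the probability of this event.

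Second, I would analyze the idealized version in which $f$ is drawn uniformly from the family $\mathcal R$ of all Boolean functions on $\{1,\ldots,\lambda\}\times \Sigma$. In that case the $\lambda$ invocations of \texttt{rejsam} use $\lambda$ independent random functions $f_1,\ldots,f_\lambda$; by Lemma~\ref{lem:rej-samp}, each output bit $\bar m_i = f_i(c_i)$ equals $m_i$ with probability at least $1/2+\tau$, and because the $f_i$ are independent across $i$, the errors $\bar m_i \oplus m_i$ are independent Bernoulli variables of bias at most $1/2-\tau$. Since $\sf E$ is an $(n,\lambda,1/2-\tau,\epsilon_{\text{enc}})$-error-correcting code, $\Dec$ recovers $m'$ with probability at least $1-\epsilon_{\text{enc}}$ in the random-$f$ world.

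Third, I would pass from $\mathcal R$ to the actual family $\mathcal F$. The whole process $R^f(P^f_h(\cdot))$ inspects $f$ at exactly the $2\lambda$ points used by the $\lambda$ rejection samplings (two per bit), and the event ``decoding succeeds'' is a deterministic function of these $2\lambda$ query answers. Hence this event can be viewed as a test performed by an unbounded adversary that makes at most $2\lambda$ adaptive queries to $f$, which is exactly the scenario of \eqref{t-wise-2}. The $(\log\lambda + \log|\Sigma|, 2\lambda, \epsilon_{\mathcal F})$-independence of $\mathcal F$ therefore shifts the success probability by at most $\epsilon_{\mathcal F}$, yielding
\[
\Pr_{f\gets \mathcal F}[R^f(P^f_h(m)) = m] \geq 1 - \epsilon_{\text{enc}} - \epsilon_{\mathcal F},
\]
which is the stated bound.

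The main subtlety, and the one the parenthetical remark in the paper flags, is that $P^f_h$ is not a deterministic process: its execution depends both on $f$ and on fresh channel draws made \emph{inside} \texttt{rejsam}, and the channel is adversarial and history-dependent. To apply \eqref{t-wise-2} cleanly one needs $R^f \circ P^f$ to be a fixed, deterministic $2\lambda$-query function of $f$. I would handle this by a worst-case argument: for any adversarial channel, if the failure probability exceeded $\epsilon_{\text{enc}}+\epsilon_{\mathcal F}$ after averaging over channel randomness and $f$, then by averaging one could fix a particular sequence of channel responses for which the conditional failure probability over $f$ is still at least that large; relative to those fixed responses the procedure is a deterministic $2\lambda$-query function of $f$, and the argument of the previous paragraph then produces a contradiction. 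This step is the only place where care is needed; everything else is a direct chaining of Lemma~\ref{lem:rej-samp}, the code parameters, and Definition~\ref{def:t-away}.
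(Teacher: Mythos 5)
Your proposal is correct and follows essentially the same route as the paper: idealize $f$ as a truly random function to get per-bit recovery probability $1/2+\tau$ via Lemma~\ref{lem:rej-samp}, invoke the $(n,\lambda,1/2-\tau,\epsilon_{\text{enc}})$-code to get success probability $1-\epsilon_{\text{enc}}$, and then pass to $\mathcal F$ via the adversarial formulation \eqref{t-wise-2} of almost $2\lambda$-wise independence, losing $\epsilon_{\mathcal F}$. Your handling of the channel-dependence of $P^f_h$ by fixing a worst-case sequence of channel responses is exactly the argument in the paper's parenthetical remark.
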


\subsection{Security}
\label{sec:security}

In this section we argue about the security of our one-time
stegosystem.  First we will observe that the output of the rejection
sampling function ${\tt rejsam}^{f}_h$, with a truly random function $f$, is
indistinguishable from the channel distribution $\mathcal{C}_h$. (This is a folklore result implicit in previous work.)  We then show that if $f$ is selected from a family that is
$\epsilon_{\mathcal{F}}$-away from $2\lambda$-wise independent, the
advantage of an adversary $\mathcal{A}$ to distinguish between the
output of the protocol and $\mathcal{C}_h$ is bounded above by $\epsilon_{\mathcal{F}}$.
Let $\mathcal{R} =\{f : \Sigma \to \{0,1\} \}$. First we characterize the
probability distribution of the rejection sampling function:
\begin{proposition}
  \label{proposition:distribution}
  The function ${\tt rejsam}^{f}_h(m)$ is a random variable with
  probability distribution expressed by the following function: Let $c
  \in \Sigma$ and $m\in \{0,1\}$. Let $\exclude_{f}(m) = \Pr_{c' \gets \mathcal{C}_h} [f(c') \neq m]$ and $p_c = \Pr_{c' \gets \mathcal{C}_h}[c' = c]$. Then
  \[
  \Pr[ {\tt rejsam}^{f}_h(m) = c] =
  \begin{cases}
   p_c\cdot(1+  \exclude_{f}(m)) & \text{if}\;f(c) = m\enspace, \\
   p_c \cdot \exclude_{f}(m) & \text{if}\;f(c) \neq m\enspace.\\
  \end{cases}
  \]
\end{proposition}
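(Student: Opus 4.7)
The plan is to unfold the definition of \texttt{rejsam}$^f_h(m)$ and express the event $\{\texttt{rejsam}^f_h(m) = c\}$ as a disjoint union of events indexed by the outcome of the first draw, then sum probabilities using the independence of the two draws. Throughout I will write $\textsf{c}_1, \textsf{c}_2$ for the two (independent) draws from $\mathcal{C}_h$ that the procedure makes, so that the output equals $\textsf{c}_1$ on the event $\{f(\textsf{c}_1) = m\}$ and equals $\textsf{c}_2$ on the event $\{f(\textsf{c}_1) \neq m\}$.

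I would split the analysis into the two cases of the claimed formula. \textbf{Case 1: $f(c) = m$.} The output equals $c$ precisely when either (a)~$\textsf{c}_1 = c$ (in which case $f(\textsf{c}_1) = f(c) = m$ and no re-sampling occurs), or (b)~$f(\textsf{c}_1) \neq m$ \emph{and} $\textsf{c}_2 = c$. These two events are disjoint because (a) forces $f(\textsf{c}_1) = m$ while (b) forces $f(\textsf{c}_1) \neq m$. Using independence of $\textsf{c}_1,\textsf{c}_2$, the probabilities are $p_c$ and $\exclude_f(m)\cdot p_c$ respectively, summing to $p_c(1 + \exclude_f(m))$. \textbf{Case 2: $f(c) \neq m$.} Here the first draw being $c$ triggers a re-sample (since $f(c) \neq m$), so the output can equal $c$ only through the re-sampled value. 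Thus the output equals $c$ iff $f(\textsf{c}_1) \neq m$ and $\textsf{c}_2 = c$; by independence this has probability $\exclude_f(m) \cdot p_c$.

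The only thing to be careful about is bookkeeping: in Case 2 one must not double-count the sub-event $\textsf{c}_1 = c$, which is legitimate (it satisfies $f(\textsf{c}_1) \neq m$) and is already absorbed into the single expression $\exclude_f(m) \cdot p_c$. Once the two cases are in hand, the formula of the proposition is exactly the piecewise expression obtained. I do not anticipate any real obstacle beyond this case analysis, since everything reduces to the elementary fact that the two draws are independent and the $\{f(\textsf{c}_1) = m\}$/$\{f(\textsf{c}_1) \neq m\}$ dichotomy partitions the sample space.
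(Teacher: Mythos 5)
Your proposal is correct and follows essentially the same argument as the paper: condition on the outcome of the first draw, note that for $f(c)=m$ the event $\{\texttt{rejsam}^f_h(m)=c\}$ is the disjoint union of $\{\textsf{c}_1=c\}$ and $\{f(\textsf{c}_1)\neq m,\ \textsf{c}_2=c\}$, while for $f(c)\neq m$ only the second event can occur, and use independence of the two draws. No gaps.
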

%\remove{
  \begin{proof}%[of Proposition~\ref{proposition:distribution}] 
    Let $c_1$ and $c_2$ be the two (independent) samples drawn from $\mc{C}_h$
    during rejection sampling. (For simplicity, we treat the process as
    having drawn two samples even in the case where it succeeds on the
    first draw.)     
    Note, now, that in the case where $f(c) \neq m$, the value $c$ is
    the result of the rejection sampling process precisely when
    $f(c_1) \neq m$ and $c_2 = c$; as these samples are independent,
    this occurs with probability $\exclude_{f}(m) \cdot p_c$.
    
    In the case where $f(c) = m$, however, we observe $c$ whenever
    $c_1 = c$ or $f(c_1) \neq m$ and $c_2 = c$. As these events are
    disjoint, their union occurs with probability $p_c \cdot (\exclude_{f}(m) +
    1)$, as desired.
  \end{proof}
%}
\begin{lemma}\label{lemma:indist-channel}
  For any $h \in \Sigma^*, m \in \{0,1\}$, the random variable ${\tt rejsam}^{f}_h(m)$ is
  perfectly indistinguishable from the channel distribution
  $C_h$ when $f$ is drawn uniformly at random from the space
  of $\mc{R}$.
\end{lemma}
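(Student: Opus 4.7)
The plan is to show, for each fixed $c \in \Sigma$, that $\Pr_{f \gets \mathcal{R}}[{\tt rejsam}^{f}_h(m) = c] = p_c$, where $p_c = \Pr[C_h = c]$, by averaging the conditional distribution given in Proposition~\ref{proposition:distribution} over the choice of $f$. Since the two expressions there depend on whether $f(c) = m$, I would condition on this event and compute the two conditional expectations of $\exclude_f(m)$ separately.

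First I would write
\[
\Pr_{f}[{\tt rejsam}^{f}_h(m) = c] = \Pr_f[f(c)=m]\cdot p_c\bigl(1 + \mathbb{E}[\exclude_f(m)\mid f(c)=m]\bigr) + \Pr_f[f(c)\neq m]\cdot p_c\,\mathbb{E}[\exclude_f(m)\mid f(c)\neq m],
\]
using Proposition~\ref{proposition:distribution}. Since $f$ is uniform in $\mathcal{R}$, both conditioning events have probability $1/2$, and the values $f(c')$ for $c' \neq c$ remain uniform and independent. Expanding $\exclude_f(m) = \sum_{c'} p_{c'}\,\mathbf{1}[f(c')\neq m]$ and using linearity of expectation, the $c' = c$ term contributes $0$ under the conditioning $f(c)=m$ and contributes $p_c$ under $f(c)\neq m$, while every other term contributes $\tfrac{1}{2}p_{c'}$. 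Thus
\[
\mathbb{E}[\exclude_f(m)\mid f(c)=m] = \tfrac{1}{2}(1-p_c), \qquad \mathbb{E}[\exclude_f(m)\mid f(c)\neq m] = p_c + \tfrac{1}{2}(1-p_c) = \tfrac{1}{2}(1+p_c).
\]

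Substituting these into the displayed sum yields
\[
\Pr_{f}[{\tt rejsam}^{f}_h(m) = c] = \tfrac{1}{2} p_c\bigl(1 + \tfrac{1}{2}(1-p_c)\bigr) + \tfrac{1}{2} p_c\cdot \tfrac{1}{2}(1+p_c) = \tfrac{p_c}{2}\bigl[1 + \tfrac{1}{2}(1-p_c) + \tfrac{1}{2}(1+p_c)\bigr] = p_c,
\]
which is exactly $\Pr[C_h = c]$. Since this holds for every $c \in \Sigma$, the distribution of ${\tt rejsam}^{f}_h(m)$ (over both the internal randomness and the uniform choice of $f$) is identical to $C_h$, hence perfectly indistinguishable.

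The only subtle point is the bookkeeping in evaluating $\mathbb{E}[\exclude_f(m)\mid f(c)=m]$ and $\mathbb{E}[\exclude_f(m)\mid f(c)\neq m]$: one must remember that the term corresponding to $c'=c$ in the sum defining $\exclude_f(m)$ is fully determined by the conditioning, while the other $|\Sigma|-1$ terms are independent coin flips. I expect this to be the main (and essentially only) source of error; once these two conditional means are computed correctly, the final identity $\tfrac{p_c}{2}\cdot 2 = p_c$ is immediate.
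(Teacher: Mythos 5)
Your proposal is correct and follows essentially the same route as the paper: both start from Proposition~\ref{proposition:distribution}, condition on whether $f(c)=m$, compute the conditional expectations of $\exclude_f(m)$ (your values $\tfrac{1}{2}(1-p_c)$ and $p_c+\tfrac{1}{2}(1-p_c)$ are exactly those used in the paper), and average with weights $\tfrac{1}{2},\tfrac{1}{2}$ to recover $p_c$. Your explicit expansion $\exclude_f(m)=\sum_{c'}p_{c'}\mathbf{1}[f(c')\neq m]$ with the $c'=c$ term handled separately is in fact a slightly cleaner write-up of the same calculation.
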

%\remove{
  \begin{proof}%[of Lemma~\ref{lemma:indist-channel}]
    Let $f$ be a random function, as described in the statement of the
    lemma. Fixing the elements $c$, and $m$, we condition on the
    event $E_{\neq}$, that $f(c) \neq m$. In light of
    Proposition~\ref{proposition:distribution}, for any $f$ drawn under
    this conditioning we shall have that $\Pr[ {\tt rejsam}^{f}_h(m) =
    c]$ is equal to
    $$
    \Pr_{c' \gets \mathcal{C}_h}[c' = c] \cdot \exclude_{f}(m) = p_c \cdot \exclude_{f}(m)\enspace,
    $$
    where we have written $\exclude_{f}(m) = \Pr_{c' \gets \mathcal{C}_h}
    [f(c') \neq m]$ and $p_c = \Pr_{c' \gets \mathcal{C}_h}[c' = c]$. Conditioned on $E_{\neq}$, then, the probability
    of observing $c$ is
    $$
    \mathbf{E}_f [p_c \cdot \exclude_{f}(m) \mid E_{\neq}] = p_c \left( p_c + \frac{1}{2}(1 - p_c)\right)\enspace.
    $$
    Letting $E_=$ be the event that $f(i,c) = m$, we similarly compute
    $$
    \mathbf{E}_f [p_c \cdot \exclude_{f}(m) \mid E_{=}] = p_c \left(1 +  \frac{1}{2}(1 - p_c)\right)\enspace.
    $$
    As $\Pr[E_=] = \Pr[E_{\neq}] = 1/2$, we conclude that the
    probability of observing $c$ is exactly
    $$
    \frac{1}{2}\left(p_c \left( p_c + \frac{1 - p_c}{2}\right) + p_c
      \left(1 + \frac{1 - p_c}{2}\right)\right) = p_c\enspace,
    $$
    as desired.
  \end{proof}
%}
\noindent

The following corollary follows immediately from the lemma above.

\begin{corollary}
\label{cor:channeldist}
  For any $h \in \Sigma^*, m \in \{0,1\}^{\lambda}$, the random variable $P^f_h$ is
  perfectly indistinguishable from the channel distribution
  $C^{\lambda}_h$ when $f$ is drawn uniformly at random from the space
  of all Boolean functions on $\{1, \ldots, \lambda\} \times \Sigma$.
\end{corollary}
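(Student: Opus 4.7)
The plan is to reduce the $\lambda$-bit claim to the one-bit case (Lemma~\ref{lemma:indist-channel}) by exploiting the independence of the slices $f_1,\ldots,f_\lambda$ and proceeding by induction on the block index. The first observation is structural: when $f$ is uniformly distributed over all Boolean functions on $\{1,\ldots,\lambda\}\times\Sigma$, the restricted functions $f_i(\cdot) = f(i,\cdot)$ are mutually independent, each uniformly distributed over $\mathcal R = \{g : \Sigma \to \{0,1\}\}$. This is an immediate consequence of the fact that a uniformly random function on a disjoint union is the independent product of uniform random functions on each part.

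Next I would set up the induction on $\lambda$. The base case $\lambda = 1$ is exactly Lemma~\ref{lemma:indist-channel}. For the inductive step, fix a history $h$ and a message $m = m_1 \ldots m_\lambda$, and consider any prefix $c_1 \ldots c_{i-1} \in \Sigma^{i-1}$ of a potential stegotext. Condition on the event that the first $i-1$ blocks produced by $P^f_h$ equal this prefix. By the inductive hypothesis applied to the first $i-1$ rounds (which only invoke $f_1,\ldots,f_{i-1}$), this conditional event has exactly the same probability as the prefix of $C^\lambda_h$, namely $\prod_{j<i} \Pr[C_{h \circ c_1 \circ \cdots \circ c_{j-1}} = c_j]$. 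Now, conditioned on this prefix, the $i$-th block is produced by running \texttt{rejsam}$^{f_i}_{h \circ c_1 \circ \cdots \circ c_{i-1}}(m_i)$, and crucially $f_i$ is independent of $f_1,\ldots,f_{i-1}$ and still uniform over $\mathcal R$. Applying Lemma~\ref{lemma:indist-channel} with history $h \circ c_1 \circ \cdots \circ c_{i-1}$ shows that the $i$-th block is distributed exactly as $C_{h \circ c_1 \circ \cdots \circ c_{i-1}}$.

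Multiplying these conditional distributions together (i.e., using the chain rule) yields
\[
\Pr[P^f_h(m) = c_1 \ldots c_\lambda] = \prod_{i=1}^{\lambda} \Pr[C_{h \circ c_1 \circ \cdots \circ c_{i-1}} = c_i] = \Pr[C^\lambda_h = c_1 \ldots c_\lambda],
\]
which is precisely the definition of perfect indistinguishability from the history-dependent channel distribution $C^\lambda_h$.

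The only real subtlety is the careful handling of the conditioning: the history fed into the $i$-th rejection sampling call is a random variable determined by $f_1,\ldots,f_{i-1}$ and the channel draws from rounds $1,\ldots,i-1$, so one must be sure that $f_i$ remains uniform and independent of this history before invoking Lemma~\ref{lemma:indist-channel}. This is exactly where the independence of the slices $f_i$ noted in the first step does all the work; once that is in hand, the induction is essentially bookkeeping and no new ideas are required.
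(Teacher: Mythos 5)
Your proof is correct and matches the paper's intent: the paper simply asserts that the corollary ``follows immediately'' from Lemma~\ref{lemma:indist-channel}, and your argument---independence of the slices $f_i$, block-by-block application of the one-bit lemma with the evolving history, and the chain rule---is exactly the implicit reasoning spelled out in detail. No gap; the handling of the conditioning on the random prefix via independence of $f_i$ from $f_1,\ldots,f_{i-1}$ is the right (and only) subtlety.
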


Having established the behavior of the rejection sampling function
when a truly random function is used, we proceed to examine the
behavior of rejection sampling in our setting where the function is
drawn from a function family that is $\epsilon_{\mathcal{F}}$-away from $2\lambda$-wise
independence. In particular we will show that the insecurity of the
defined stegosystem is characterized as follows:
\begin{lemma}
\label{lem:security}
The insecurity of the stegosystem $S$ of Section~\ref{sec:stegosystem}
is bound by $\epsilon_{\mathcal{F}}$, i.e., $\mathbf{InSec}_{S}(n) \leq \epsilon_{\mathcal{F}}$, where $\epsilon_{\mathcal{F}}$ is the
bias of the almost $2\lambda$-wise independent function family employed;
recall that $\lambda = \ell(n)$ is the stretching of the input incurred due to
the error-correcting code.
\end{lemma}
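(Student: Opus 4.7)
The plan is a standard two-step hybrid argument: we introduce an intermediate experiment in which the stegotext is produced using a truly random key function rather than one drawn from $\mathcal{F}$, and show (i) this intermediate experiment is identical in distribution to the ``channel-only'' experiment ($b=1$) by Corollary~\ref{cor:channeldist}, and (ii) it is within $\epsilon_{\mathcal F}$ of the real stegotext experiment ($b=0$) by the almost $2\lambda$-wise independence of $\mathcal{F}$.

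First I would fix an arbitrary adversary $\mathcal{A} = (SA_1, SA_2)$ and unpack $G^{\mathcal{A}}(1^n)$. The channel oracle $\mathcal{O}(\cdot)$ available to $SA_1$ is independent of the key $f$, so the outputs $(m^{\ast}, s, h_{\sf c})$ of $SA_1$ are distributed independently of $f$. Let Hybrid~0 be the experiment with $b = 0$ as written, let Hybrid~2 be the experiment with $b = 1$, and let Hybrid~1 be the modification of Hybrid~0 in which the key function is replaced by a uniformly random $f \in \mathcal{R}^{\ast}$, the collection of all Boolean functions on $\{1,\ldots,\lambda\} \times \Sigma$. In Hybrid~1 the challenge is $c^{\ast} = P^{f}_{h_{\sf c}}(\Enc(m^{\ast}))$ with truly random $f$, so by Corollary~\ref{cor:channeldist} its distribution is identical to $\lambda$ fresh channel draws; hence Hybrids~1 and~2 induce exactly the same distribution on $SA_2$'s view, and thus $\Pr[SA_2 \to 1 \mid \text{Hyb }1] = \Pr[SA_2 \to 1 \mid \text{Hyb }2]$.

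The key step is bounding $|\Pr[SA_2 \to 1 \mid \text{Hyb }0] - \Pr[SA_2 \to 1 \mid \text{Hyb }1]|$. I would build from $\mathcal{A}$ an unbounded distinguisher $\mathcal{B}^{f[2\lambda]}$ that, in the sense of equation~\eqref{t-wise-2}, makes at most $2\lambda$ queries to $f$: run $SA_1$ (answering its channel queries directly), then simulate $P^{f}_{h_{\sf c}}(\Enc(m^{\ast}))$, querying $f$ at exactly the points required by the $\lambda$ calls to $\tt{rejsam}$ (at most two per bit, i.e., $\leq 2\lambda$ total), obtain $c^{\ast}$, feed $(1^n, h_{\sf c}, c^{\ast}, s)$ to $SA_2$, and output its bit. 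The distribution of $\mathcal{B}$'s output when $f \stackrel{r}{\gets} \mathcal{F}$ is that of $SA_2$ in Hybrid~0, and when $f \stackrel{r}{\gets} \mathcal{R}^{\ast}$ it is that of $SA_2$ in Hybrid~1, so equation~\eqref{t-wise-2} yields the desired $\epsilon_{\mathcal F}$ gap. Chaining the two hybrids gives $|\Pr[b' = 1 \mid b = 0] - \Pr[b' = 1 \mid b = 1]| \leq \epsilon_{\mathcal F}$, and since $\mathbf{Adv}_{S}^{\mathcal A}(n) = \tfrac{1}{2}|\Pr[b'=1\mid b=1] - \Pr[b'=1\mid b=0]|$ we conclude $\mathbf{InSec}_{S}(n) \leq \epsilon_{\mathcal F}/2 \leq \epsilon_{\mathcal F}$.

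The main obstacle I expect is cleanly handling the fact that the $2\lambda$ query points of the reduction $\mathcal{B}$ are not fixed in advance but are determined adaptively by the channel draws made inside the simulation of $P^{f}_{h_{\sf c}}$; this is not a real problem because the unbounded-adversary formulation~\eqref{t-wise-2} of almost $t$-wise independence permits adaptive queries, but it needs to be articulated. A secondary bookkeeping point is that $SA_1$'s channel queries are part of $\mathcal{B}$'s simulation and do not consume any of the $2\lambda$ $f$-query budget, since the channel distributions are $f$-independent.
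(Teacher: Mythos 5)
Your proposal is correct and follows essentially the same route as the paper's own proof: compare the real game with an intermediate game in which the key function is truly random (bounding that gap by $\epsilon_{\mathcal F}$ via the adversarial formulation~\eqref{t-wise-2} of almost $2\lambda$-wise independence, since the embedding queries $f$ at most $2\lambda$ times), and then use the perfect indistinguishability of rejection sampling with a truly random function from the channel (Corollary~\ref{cor:channeldist}) to collapse the intermediate game onto the $b=1$ case. Your write-up is in fact more explicit than the paper's (the hybrid structure, the reduction $\mathcal{B}$, the adaptivity remark, and the factor $\tfrac{1}{2}$ in the advantage), but the underlying argument is the same.
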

\noindent

\begin{proof}%[of Theorem~\ref{thm:security}]
Let us play the following game $G(1^\kappa)$ with the adversary $\mathcal{A}$. 

In each round we either select $G_1^\mathcal{A}$ or $G_2^\mathcal{A}$: 
\paragraph{}
\begin{tabular}{|l|l|}
\hline
$G_1^\mathcal{A}(1^\kappa$)&\\
\hline
1.& $k \gets \{0,1\}^\kappa$\\
2.& $(m^\ast,s)\gets SA^{\mathcal{O}(h)}_1(1^\kappa,h)$, $m^\ast \in \{0,1\}^n $ \\
3.& $b \stackrel{r}{\gets} \{0,1\}$\\
4.&$c^\ast$ = $\left \{ \begin{array}{lll}
c_0, c_1, \dots c_{\lambda-1} & c_i = \mbox{\texttt{rejsam}}^{f_k, i}_h(m_i), h = h \circ c & \hspace{0.5 cm}  \textrm{ if }b=0\\
\mbox{from the channel} && \hspace{0.5 cm} \textrm{ if } b=1
\end{array} \right .$\\
5. & $b^\ast \gets SA_2(c^\ast,s)$\\
6. & if $ b = b^\ast$ then success\\
\hline
\end{tabular}
\paragraph{}
\begin{tabular}{|l|l|}
\hline
$G_2^\mathcal{A}(1^\kappa)$\\
\hline
1.& $f \gets \mathcal{R}$ \\
2.& $(m^\ast,s)\gets SA^{\mathcal{O}(h)}_1(1^\kappa,h)$, $m^\ast \in \{0,1\}^n $ \\
3.& $b \stackrel{r}{\gets} \{0,1\}$\\
4.&$c^\ast$ = $\left \{ \begin{array}{lll}
c_0, c_1, \dots c_{\lambda-1} & c_i = \mbox{\texttt{rejsam}}^{f, i}_h(m_i), h = h \circ c &\hspace{0.5 cm} \textrm{ if } b=0\\
\mbox{from the channel} && \hspace{0.5 cm}  \textrm{ if } b=1
\end{array} \right .$\\
5.& $b^\ast \gets SA_2(c^\ast,s)$\\
6.& if $ b = b^\ast$ then success\\
\hline
\end{tabular}
\paragraph{}
\begin{eqnarray*}
\textbf{Adv}_{S}^{\mathcal{A}}(G(1^\kappa))& = & \Big | \Pr [\mathcal{A}^{{\mathcal{O}(h)},c^\ast \gets SE(k,^.,^.,^.)} = 1]
- \Pr[A^{{\mathcal{O}(h)}, c^\ast \gets \mathcal{C}_h  } = 1] \Big | \\
&=& \Pr_{f \gets \mathcal{F}_\kappa}[G(1^\kappa)=1]- \Pr_{f \gets \mathcal{R}}[G(1^\kappa)=1] \leq \epsilon_{\mathcal{F}} 
\end{eqnarray*}
and the lemma follows by the definition of insecurity.
\end{proof}

\subsection{Putting it all together}
\label{sec:parameters}

The objective of this section is to integrate the results of the previous sections of the paper into one unifying theorem. As our system is built over two-sample rejection sampling, a process that faithfully transmits each bit with probability $1/2 + \tau$, we cannot hope to achieve rate exceeding 
$$
R' = 1 - H(1/2 + \tau) = 1 - H(1/4 + 2^{-\delta}/4)\,.
$$
Indeed, as described in the theorem below, the system asymptotically converges to the rate of this underlying rejection sampling channel. (We remark that with sufficiently large channel entropy, one can draw more samples during rejection sampling without interfering with security; this can control the noise introduced by rejection sampling.)

\begin{theorem}

For $\delta = \Omega(\sqrt{(\log\log n)^{3}/\log n})$
the stegosystem $S$ uses private keys $k$ of length no more than 
$$
(2 + o(1))\left[\lambda(n) + \log 1/\epsilon_\mathcal{F} + \log\log\log |\Sigma|\right]
$$
and is both $(\epsilon_{enc} + \epsilon_\mathcal{F},\delta)$-sound and $(\epsilon_\mathcal{F},\delta)$-secure. The length of the stegotext $\lambda(n)$ is

$$\lambda(n) \leq \left(1+\frac{1}{\log(\log n)}\right)^{2}\frac{n}{R^{\prime}}\,,$$
where $\epsilon_{enc} \leq e^{-4n/\log n}$ and $R^{\prime} = 1-H\left(1/4+2^{-\delta}/4\right)$.%, $H$ is the Shannon entropy 
\end{theorem}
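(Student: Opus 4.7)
The plan is to assemble the theorem by composing the three substantive results already established: the soundness bound of Lemma~\ref{lem:soundness}, the security bound of Lemma~\ref{lem:security}, the Forney-code parameters from Proposition~\ref{proposition:forneycode}, and the key-length bound from Theorem~\ref{thm:independence}. The soundness and security clauses themselves are essentially direct invocations, so the real work is instantiating parameters and checking that the hypothesis on $\delta$ translates into the cleaner forms advertised in the statement.

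First I would set $\tau = (1 - 2^{-\delta})/4$ (as in Lemma~\ref{lem:rej-samp}) and invoke Proposition~\ref{proposition:forneycode} with this value of $\tau$ to obtain the code $\textsf{E}_n$ used in the construction of Section~\ref{sec:stegosystem}. The proposition's precondition is $(104\log\log n)^{3}/\log n \le \tau^{2}$; because $1-2^{-\delta} = \Theta(\delta)$ in the range of interest, the hypothesis $\delta = \Omega\bigl(\sqrt{(\log\log n)^{3}/\log n}\bigr)$ (with a large enough hidden constant) yields $\tau^{2}\log n \ge (104\log\log n)^{3}$, satisfying the precondition. The conclusion of the proposition then gives both $\epsilon_{\text{enc}} \le e^{-4n/\log n}$ and $\lambda(n) = \ell(n) \le (1+57/\sqrt[3]{\tau^{2}\log n})^{2}\,n/R'$. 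The same precondition implies $\sqrt[3]{\tau^{2}\log n} \ge 104\log\log n$, so $57/\sqrt[3]{\tau^{2}\log n} \le 1/\log\log n$, which is exactly the simplification needed to reach the stated $(1+1/\log\log n)^{2}n/R'$ bound on $\lambda(n)$.

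Next I would instantiate the almost $t$-wise independent function family by Theorem~\ref{thm:independence}. The family $\mathcal{F}$ required by the protocol lives on the domain $\{1,\ldots,\lambda\}\times\Sigma$, which is encoded in $\log\lambda + \log|\Sigma|$ bits, and is used with $t = 2\lambda$ and bias $\epsilon_\mathcal{F}$. Theorem~\ref{thm:independence} therefore provides a family indexed by keys of length
\[
(2+o(1))\bigl(\log\log(\log\lambda + \log|\Sigma|) + \lambda + \log 1/\epsilon_\mathcal{F}\bigr).
\]
Using $\log\log(\log\lambda+\log|\Sigma|) \le \log\log\log|\Sigma| + O(1)$ whenever $\log|\Sigma|$ is not dwarfed by $\log\lambda$, and absorbing the $\log\log\log\lambda$ contribution into the $o(1)\lambda$ slack, the key length simplifies to the advertised $(2+o(1))[\lambda(n)+\log 1/\epsilon_\mathcal{F}+\log\log\log|\Sigma|]$.

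With these parameter choices in hand the remaining two clauses are immediate: $(\epsilon_{\text{enc}}+\epsilon_\mathcal{F},\delta)$-soundness is exactly the conclusion of Lemma~\ref{lem:soundness} applied to our code and function family, and $(\epsilon_\mathcal{F},\delta)$-security is Lemma~\ref{lem:security}. The only nontrivial piece is the arithmetic matching the hypothesis on $\delta$ to the Forney precondition and to the final clean factor $(1+1/\log\log n)^{2}$; everything else is routine bookkeeping about which parameters in Proposition~\ref{proposition:forneycode} and Theorem~\ref{thm:independence} are being fed in.
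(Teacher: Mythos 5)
Your proposal is correct and follows essentially the same route as the paper's own proof: instantiate the Forney code of Proposition~\ref{proposition:forneycode} with $\tau=(1-2^{-\delta})/4$ (checking the precondition against the hypothesis on $\delta$, which also yields the $(1+1/\log\log n)^2$ bound on $\lambda(n)$), instantiate the almost $2\lambda$-wise independent family via Theorem~\ref{thm:independence} to get the stated key length, and conclude soundness and security from Lemma~\ref{lem:soundness} and Lemma~\ref{lem:security}. If anything, your parameter bookkeeping (e.g., absorbing the $\log\log(\log\lambda+\log|\Sigma|)$ term) is spelled out more explicitly than in the paper.
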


\begin{proof}
Let $\Sigma = \{ \sigma_1, \ldots, \sigma_s\}$ denote an alphabet and define the \emph{channel} as a family of random variables $\mathcal{C} = \{C_h\}_{h \in \Sigma^\ast}$; each $C_h$ supported on $\Sigma$. Also, the channel $\mathcal{C}$ has min entropy $\delta$, so that $\forall h \in \Sigma^\ast$, $H_\infty(C_h) \geq \delta$. Fix an alphabet $\Sigma$ for the channel and choose a message length $n \geq 16$ such that  
$$
-\log\left(1-4\sqrt{{104 \log\left(\log n\right)^{3}}/{\log n}}\right) \leq \delta\,.
$$
Under the assumption that the channel $\mathcal{C}$ has min entropy $\delta$, the binary symmetric channel induced by the rejection sampling process of Lemma~\ref{lem:rej-samp} has transition probability no more than $1/4(1 + 2^{-\delta})$.
We have an efficient $(n, \lambda, \frac{1}{4}(1 + 2^{-\delta}), \epsilon_{\text{enc}})$ error-correcting code as discussed in Section~\ref{sec:codes} that encodes messages of length $n$ as codewords of length

%$$\lambda(n) = \left(1+o(1)\right)\frac{n}{R^{\prime}}\textrm{ bits}.$$
$$\lambda(n) = \left(1+\frac{57}{\sqrt[3]{\tau^{2}\log n}}\right)^{2}\frac{n}{R^{\prime}} \leq \left(1+\frac{1}{\log(\log n)}\right)^{2}\frac{n}{R^{\prime}}\textrm{ bits}$$
%$$
%\lambda(\nu) = (1 +o(1)) \nu \left(1+ 4 \left(\frac{\nu}{\log \nu}\right)^{\frac{-{\tau}^{2} \log_2(e)}{24 R}}\right)/R
%$$and has decoding error $e^{- \frac{4n}{\log n}}$. With this choice of error-correcting code, we select a family of $(\log \lambda + \log |\Sigma|, 2 \lambda, \epsilon_{\mathcal{F}})$-independent functions, which requires
\[
(2 + o(1))\left[ \lambda(n) +\log 1/\epsilon_\mathcal{F} + \log\log\log |\Sigma| \right] \\
%{}& &= (2 + o(1))\left[ \log\log\left(\log \left(\frac{\nu(1 + \gamma)}{R}\right)+\log |\Sigma|\right) +\frac{\nu(1 + \gamma)}{R} + \log(2\frac{\nu(1 + \gamma)}{R}) +\log 1/\epsilon_\mathcal{F}\right]
\]
random bits; these serve as the key for the stegosystem. %For clarity, the length of the private keys $k$ used by our scheme is no more than $O\left(n + \log|\Sigma| +\log 1/\epsilon_\mathcal{F}\right)$. 
In light of the conclusions of Lemma~\ref{lem:security} and Lemma~\ref{lem:soundness}, this system achieves the $(\epsilon_{\text{enc}} + \epsilon_\mathcal{F},\delta)$-soundness and $(\epsilon_\mathcal{F},\delta)$-security.% asserted in the statement of the theorem.
\end{proof}

For concreteness, we record two corollaries:
\begin{corollary}
\label{cor:conc}

There exists a function $\delta(n) = o(1)$ 
so that the stego system $S$, using private keys $k$ of length no more than
$$
O\left(n + \log|\Sigma| +\log 1/\epsilon_\mathcal{F}\right)\,,
$$
is both $(e^{-4n/\log n} + \epsilon_\mathcal{F},\delta)$-sound and $(\epsilon_\mathcal{F},\delta)$-secure. Here, the length of the stegotext is

$$\lambda(n) = \left(1+o(1)\right)\frac{n}{R^{\prime}}$$
where $R^{\prime} = 1-H\left(1/4(1 +2^{-\delta})\right)$.

\end{corollary}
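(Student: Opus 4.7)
The plan is to derive the corollary as an asymptotic specialization of the preceding Theorem: pick $\delta(n)$ satisfying the Theorem's hypothesis and then collapse the explicit bounds into the advertised $O(\cdot)$ and $(1+o(1))$ forms. There is no new mathematical content; the entire proof is bookkeeping.

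First I would exhibit a concrete $\delta(n) = o(1)$. The Theorem's precondition is $\delta(n) \geq -\log\bigl(1-4\sqrt{104(\log\log n)^{3}/\log n}\bigr)$, and by the Taylor expansion $-\log(1-x) = x + O(x^{2})$ for small $x$ this reduces to the asymptotic requirement $\delta(n) = \Omega\bigl(\sqrt{(\log\log n)^{3}/\log n}\bigr)$. Any function tending to zero while dominating this lower bound suffices; for concreteness one may take $\delta(n) = (\log n)^{-1/4}$, which is manifestly in $o(1)$ and dominates the required lower bound for $n$ large enough.

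With $\delta(n)$ so fixed, I would read off the Theorem's conclusion and simplify. Since $(1+1/\log\log n)^{2} = 1 + o(1)$, the bound $\lambda(n) \leq (1+1/\log\log n)^{2}\, n / R'$ becomes $\lambda(n) = (1+o(1))\, n/R'$, which is the stated stegotext length. Using the trivial estimate $\log\log\log|\Sigma| \leq \log|\Sigma|$ together with $(2+o(1)) = O(1)$ collapses the Theorem's key-length bound $(2+o(1))\bigl[\lambda(n) + \log(1/\epsilon_{\mathcal F}) + \log\log\log|\Sigma|\bigr]$ into $O\bigl(n + \log|\Sigma| + \log(1/\epsilon_{\mathcal F})\bigr)$, with the factor $1/R'$ absorbed into the hidden constant. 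The soundness and security conclusions transfer verbatim from the Theorem, using the error-correcting code guarantee $\epsilon_{\text{enc}} \leq e^{-4n/\log n}$ supplied by Proposition~\ref{proposition:forneycode}.

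The only point requiring any attention is the asymptotic bookkeeping around the factor $1/R' = 1/\bigl(1 - H(1/4(1+2^{-\delta}))\bigr)$: since $R'$ depends on the chosen $\delta(n)$, one must verify that $1/R'$ stays within the implicit constant absorbed by the $O(\cdot)$ in the key-length bound for the specific $\delta(n)$ selected. Beyond that piece of routine asymptotic verification, I expect no substantive obstacle.
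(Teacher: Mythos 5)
Your overall route --- specializing the Theorem of Section~\ref{sec:parameters} by exhibiting an explicit $\delta(n)=o(1)$ --- is the same one the paper intends (the paper records this corollary without further argument), and most of your bookkeeping is fine: $\delta(n)=(\log n)^{-1/4}$ does satisfy the Theorem's hypothesis, $\lambda(n)=(1+o(1))\,n/R'$ follows, and soundness/security transfer verbatim. However, the step you defer as ``routine asymptotic verification'' is exactly where the argument fails. For small $\delta$ one has $\tau=\frac14\left(1-2^{-\delta}\right)=\Theta(\delta)$ and hence $R'=1-H(1/2-\tau)=\Theta(\tau^{2})=\Theta(\delta^{2})$. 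With your choice $\delta(n)=(\log n)^{-1/4}$ this gives $1/R'=\Theta(\sqrt{\log n})$, so $\lambda(n)=\Theta(n\sqrt{\log n})$ and the Theorem's key length $(2+o(1))\bigl[\lambda(n)+\log 1/\epsilon_\mathcal{F}+\log\log\log|\Sigma|\bigr]$ is $\Theta(n\sqrt{\log n})$, which is \emph{not} $O\left(n+\log|\Sigma|+\log 1/\epsilon_\mathcal{F}\right)$ with an absolute constant. The factor $1/R'$ cannot be ``absorbed into the hidden constant'' because it is not bounded; and this is not an artifact of your particular choice: for every $\delta(n)=o(1)$ one has $1/R'\to\infty$, so no $o(1)$ choice of $\delta$ makes the absorption step go through.

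Concretely, what the Theorem yields for any $\delta(n)=o(1)$ is a key of length $O\bigl(\lambda(n)+\log 1/\epsilon_\mathcal{F}+\log\log\log|\Sigma|\bigr)=O\bigl(n/R'+\log 1/\epsilon_\mathcal{F}+\log\log\log|\Sigma|\bigr)$, which is $\omega(n)$ in $n$. So the key-length clause of the corollary can only be read either with the implicit constant depending on $\delta$ (equivalently, hiding a factor $\Theta(\delta^{-2})=\Theta(1/R')$), or in the regime of $\delta$ bounded below by a constant, which is Corollary~\ref{corr:cons}. Your proof should make this explicit: either restate the key length as $O\bigl(\lambda(n)+\log|\Sigma|+\log 1/\epsilon_\mathcal{F}\bigr)$, or state clearly that the $O(\cdot)$ constant is permitted to depend on the rate $R'$ of the induced binary symmetric channel. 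As written, presenting the divergence of $1/R'$ as a routine check is a genuine gap, since the quantity you need to be a constant tends to infinity.
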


\begin{corollary}
\label{corr:cons}
For any constant $\delta$, the stegosystem $S$ uses private keys of length $O(n + \log \Sigma + \log \epsilon_{\mathcal{F}})$ and transmits no more than $O(n)$ symbols. 
\end{corollary}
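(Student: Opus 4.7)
The plan is to specialize the main Theorem just above to the regime where $\delta > 0$ is a fixed constant. First I would verify that for any such constant, the hypothesis $\delta = \Omega(\sqrt{(\log\log n)^{3}/\log n})$ of the Theorem is trivially satisfied for all sufficiently large $n$, since the right-hand side tends to $0$. Thus the Theorem applies verbatim with this $\delta$, and both the $(\epsilon_{enc} + \epsilon_\mathcal{F},\delta)$-soundness and $(\epsilon_\mathcal{F},\delta)$-security carry over unchanged.

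Next I would observe that since $\delta$ is a positive constant, the transition probability $p = \tfrac{1}{4}(1 + 2^{-\delta})$ of the induced binary symmetric channel is a constant strictly less than $1/2$, so the rate $R' = 1 - H(p)$ is a strictly positive constant depending only on $\delta$. The Theorem's bound
\[
\lambda(n) \leq \left(1 + \frac{1}{\log\log n}\right)^{2}\frac{n}{R'}
\]
is therefore $O(n)$, which establishes that the stegosystem transmits at most $O(n)$ symbols.

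Finally, the Theorem's key-length bound $(2+o(1))[\lambda(n) + \log 1/\epsilon_{\mathcal{F}} + \log\log\log|\Sigma|]$ becomes, after substituting $\lambda(n) = O(n)$ and using the crude bound $\log\log\log|\Sigma| = O(\log|\Sigma|)$, a quantity of order $O(n + \log|\Sigma| + \log 1/\epsilon_{\mathcal{F}})$, as required. There is no substantive obstacle: the whole argument is a direct specialization of parameters, with the implicit constants in the $O(\cdot)$ notation absorbing the (now constant) dependence on $\delta$ through $R'$.
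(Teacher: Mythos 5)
Your proposal is correct and is essentially the paper's (implicit) argument: the corollary is stated without a separate proof precisely because it is a direct specialization of the preceding theorem, with constant $\delta>0$ making $R'=1-H\bigl(\tfrac14(1+2^{-\delta})\bigr)$ a positive constant so that $\lambda(n)=O(n)$, and the key-length bound then collapsing to $O(n+\log|\Sigma|+\log 1/\epsilon_{\mathcal{F}})$. Your added checks (the theorem's min-entropy hypothesis holding trivially for large $n$, and absorbing $\log\log\log|\Sigma|$ into $\log|\Sigma|$) are exactly the right bookkeeping.
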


\section{A provably secure stegosystem for longer messages}
In this section we show how to apply the ``one-time'' stegosystem 
of Section~\ref{sec:stegosystem} together with  a pseudorandom number
generator so that longer messages can be transmitted.
\begin{definition}
Let $U_l$ denote the uniform distribution over $\{0,1\}^l$.
A polynomial time deterministic program $G$ is a pseudorandom
generator (PRG) if the following conditions are satisfied:
\begin{description}
\item[Variable output] For all seeds $x \in \{0,1\}^\ast$ and $y \in
\mathbb{N}$, $|G(x,1^y)|=y$ and, furthermore, $G(x,1^y)$ is a prefix
of $G(x,1^{y+1})$.
\item[Pseudorandomness] For every polynomial $p$ the
set of random variables $\{G(U_l, 1^{p(l)} )\}_{l \in {\rm N}}$ is
computationally indistinguishable from the uniform distribution
$U_{p(l)}$.
\end{description}
\end{definition}
Note that there is a procedure $G'$ that if $z = G(x, 1^y)$ it holds
that $G(x, 1^{y+y'}) = G'(x, z, 1^{y'})$ (i.e., if one maintains $z$,
one can extract the $y'$ bits that follow the first $y$ bits without
starting from the beginning).  For a PRG $G$, if $A$ is some
statistical test, then we define the advantage of $A$ over the PRNG as
follows:
\[\mbox{\bf Adv}_{G}^A(l) = \left| \Pr_{\hat{l} \gets G(U_l, 1^{p(l)})}
[A(\hat{l}) = 1] - \Pr_{\hat{l} \gets U_{p(l)}}[A(\hat{l}) = 1]\right|\] 
The insecurity of the PRNG $G$ is 
then defined 
$$
\mbox{\bf{InSec}}^{PRG}_{G}(l) =
\mbox{max}_{A}\{\mbox{\bf{Adv}}_{G}^A (l)\}\enspace.
$$
Note that typically in PRGs there is a procedure $G'$ as well as the
process $G(x, 1^y)$ produces some auxiliary data ${\sf aux}_y$ of
small length so that the rightmost $y'$ bits of $G(x, 1^{y+y'})$ may
be sampled directly as $G'(x, 1^{y'}, {\sf aux}_y)$.
Consider now the following stegosystem $S' = (SE', SD')$ that can be
used for arbitrary many and long messages and employs a PRG $G$ and
the one-time stegosystem $(SK, SE, SD)$ of Section~\ref{sec:stegosystem}.
The two players Alice and Bob, share a key of length $l$ denoted by
$x$. They also maintain a state $N$ that holds the number of bits that
have been transmitted already as well the auxiliary information ${\sf
  aux}_N$ (initially empty).  The function $SE'$ is given input $N,
{\sf aux}_N, x, m \in\{0,1\}^{n}$ where $m$ is the message to be
transmitted. $SE'$ in turn employs the PRG $G$ to extract a number of
bits $\kappa$ as follows $k = G'(x, 1^{\kappa}, {\sf aux}_N)$. The length $\kappa$
is selected to match the number of key bits that are required to
transmit the message $m$ using the one-time stegosystem of
section~\ref{sec:stegosystem}.  Once the key $k$ is produced by the
PRG the procedure $SE'$ invokes the one-time stegosystem on input $k,
m, h$.  After the transmission is completed the history $h$, the count
$N$, as well as the auxiliary PRG information ${\sf aux}_N$ are
updated accordingly.  The function $SD'$ is defined in a
straightforward way based on $SD$.
\begin{theorem}
  The stegosystem $S' = (SE' , SD')$ is provably secure in the model
  of \cite{DBLP:conf/crypto/HopperLA02} (universally
  steganographically secret against chosen hiddentext attacks); in
  particular 
$$\mathbf{InSec}^{SS}_{S'}(t, q, l) \leq
  \mathbf{InSec}^{PRG}(t+ \gamma(\ell(l)), \ell( l ) + {\sf polylog}(l) )
$$
  (where $t$ is the time required by the adversary, $q$ is the number
  of chosen hiddentext queries it makes, $l$ is the total number of
  bits across all queries and $\gamma(v)$ is the time required to simulate
  the $SE'$ oracle for $v$ bits).
\end{theorem}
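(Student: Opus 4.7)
The plan is the standard hybrid reduction from stegosystem security to PRG security. Given any efficient adversary $A$ against $S'$ making $q$ chosen-hiddentext queries with total message length $l$ and running in time $t$, I would construct a PRG distinguisher $D$ of comparable complexity whose advantage against $G$ is (essentially) that of $A$ against $S'$. The distinguisher $D$ receives a string $z \in \{0,1\}^{\ell(l) + \mathrm{polylog}(l)}$ (either drawn from $G(U_l, 1^{\ell(l) + \mathrm{polylog}(l)})$ or uniformly at random) and simulates $A$'s chosen-hiddentext experiment as follows: it plays the role of Alice, treating $z$ as the output of the PRG, and for each query $m_i$ from $A$ it pulls the requisite $\kappa_i$ bits off $z$ to serve as the one-time key, invokes the one-time embedding procedure $SE$ of Section~\ref{sec:stegosystem} on $(m_i, h)$, updates the history $h$ and the offset/auxiliary counter, and returns the resulting stegotext to $A$. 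Finally $D$ outputs whatever $A$ outputs (or the indicator of $A$'s guess matching the challenge bit, in the obvious way).

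The two hybrids then line up as follows. When $z$ is drawn from $G$, the simulation is distributed identically to the real attack game against $S'$, because this is exactly how $SE'$ is defined via $G'$ and $SE$ in the paragraph preceding the theorem. When $z$ is uniform, every one-time invocation receives an independently uniform key of the length required by $SK$, and consequently each call is an execution of the one-time stegosystem under a fresh truly-random key. By Lemma~\ref{lem:security} together with a standard hybrid over the $q$ queries, $A$'s advantage in this world is at most $q \cdot \epsilon_\mathcal{F}$, which is absorbed into the statement's asymptotic bound (or may be accounted for as a negligible additive term when a more precise statement is desired). The difference between $A$'s success probabilities in the two worlds is, by construction, exactly $\mathbf{Adv}_G^D(l)$, so
\[
\mathbf{Adv}^{SS}_{S',A}(l) \;\leq\; \mathbf{Adv}^{PRG}_{G}(D) \;+\; q \epsilon_\mathcal{F}\,,
\]
and taking the max over adversaries of matching complexity yields the claimed bound.

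For the complexity accounting, $D$'s running time is $t + \gamma(\ell(l))$, where the $\gamma(\ell(l))$ term captures the cost of simulating the $SE'$ oracle across all $q$ queries (the total number of bits extracted from $z$ is bounded by $\ell(l)$, plus the $\mathrm{polylog}(l)$ bits of auxiliary state maintained by $G'$). The length parameter fed to $\mathbf{InSec}^{PRG}$ is therefore $\ell(l) + \mathrm{polylog}(l)$, matching the theorem statement.

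The only real subtlety, and the place I expect to be careful, is the bookkeeping for the incremental PRG: since $SE'$ calls $G'(x, 1^{\kappa_i}, \mathsf{aux}_{N_i})$ for each query with a state-dependent $\kappa_i$, one must verify that concatenating the successive outputs of $G'$ across queries yields a single prefix of $G(x, 1^{\ell(l)})$, so that the distinguisher's substitution of $z$ for this concatenation is faithful in both hybrids. This is guaranteed by the variable-output/prefix property in the definition of PRG, but it is the one place where the reduction is more than syntactic substitution.
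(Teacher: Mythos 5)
The paper does not actually supply a proof of this theorem---it is stated and the text moves directly to the performance comparison---so there is nothing to compare line by line; your reduction is the standard argument the authors evidently intend, and it is sound. Concretely: the simulation of the chosen-hiddentext game with the challenge string $z$ substituted for the PRG output, the observation that the $z\sim G$ world is identical to the real attack on $S'$, the observation that the $z$ uniform world gives each one-time invocation an independent truly random key (this independence rests on the variable-output/prefix property of $G$, which you rightly single out as the only non-syntactic step), and the final triangle inequality through the channel oracle are exactly the right decomposition, with $\gamma(\ell(l))$ accounting for the cost of simulating the $SE'$ oracle, including the channel sampling needed for rejection sampling. One point where you are in fact more careful than the stated theorem: in the uniform-key hybrid the adversary retains advantage up to $q\,\epsilon_{\mathcal{F}}$ (by Lemma~\ref{lem:security} and a hybrid over the $q$ queries, and this bound holds even against unbounded adversaries since the one-time guarantee is information-theoretic), so the honest bound is $\mathbf{InSec}^{SS}_{S'}(t,q,l) \leq \mathbf{InSec}^{PRG}(t+\gamma(\ell(l)),\ell(l)+\mathsf{polylog}(l)) + q\,\epsilon_{\mathcal{F}}$; the theorem as printed omits this additive term, which is only legitimate if $\epsilon_{\mathcal{F}}$ is taken negligible (as in the parameter choices of the comparison subsection) or folded into the statement implicitly. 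Flagging and retaining that term, as you do, is the correct resolution rather than a defect in your argument.
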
 
\subsection{Performance Comparison of the Stegosystem $S'$ and the
   Hopper, Langford, von Ahn System }

The system of Hopper, et al.~\cite{DBLP:conf/crypto/HopperLA02} concerns a situation where the min entropy of all $\mathcal{C}_h$ is at least 1 bit. In this case, we may select an $(n,\lambda,3/8,\epsilon_{\text{enc}})$-error-correcting code
\textsf{E}. Then the system of Hopper, et
al. correctly decodes a given
message with probability at least $1-\epsilon_{\text{enc}}$ and makes no
more than $2\lambda$ calls to a pseudorandom function family. Were one to
use the pseudorandom function family of Goldreich, Goldwasser, and
Micali~\cite{DBLP:journals/jacm/GoldreichGM86}, then this involves
production of $\Theta(\lambda \cdot k \cdot (\log(|\Sigma|) + \log \lambda))$ pseudorandom bits,
where $k$ is the security parameter of the pseudorandom function
family. Of course, the security of the system depends on the security
of the underlying pseudorandom generator.  
On the other hand, with the same error-correcting code, the
steganographic system described above utilizes $O\left[ \log\log\log |\Sigma| +\lambda + \log 1/\epsilon_\mathcal{F}\right]$ pseudorandom bits, correctly decodes a
given message with probability $1-(\epsilon_{\text{enc}} + \epsilon_{\mathcal{F}})$, and
possesses insecurity no more than $\epsilon_{\mathcal{F}}$. In order to
compare the two schemes, note that by selecting $\epsilon_{\mathcal{F}} =
2^{-k}$, both the decoding error and the security of the two systems
differ by at most $2^{-k}$, a negligible function in terms of the
security parameter $k$. (Note also that pseudorandom functions
utilized in the above scheme have security no better than $2^{-k}$
with security parameter $k$.)  In this case, the number of pseudorandom bits used by our system, 
%$$
%(2 + o(1))\bigl[ \log\log\left(\log \lambda(\nu)+\log |\Sigma|\right) +\lambda(\nu) + \log(2 \cdot \lambda(\nu)) +\log 1/\epsilon_\mathcal{F}\bigr]\,,
%$$
%% :
$$
(2 + o(1))\bigl[ \lambda(n) + \log 1/\epsilon_\mathcal{F} + \log\log\log |\Sigma|\bigr]\,,
$$
is a dramatic improvement over the
$\Theta(\lambda k \log (|\Sigma|\lambda))$ bits of the scheme above.

\bibliographystyle{plain}
\bibliography{ih-jcrypto}

\end{document}